\documentclass{article}

\usepackage{arxiv}

\usepackage[utf8]{inputenc} 
\usepackage[T1]{fontenc}    
\usepackage{hyperref}       
\usepackage{url}            
\usepackage{booktabs}       
\usepackage{amsfonts}       
\usepackage{nicefrac}       
\usepackage{microtype}      
\usepackage{amsthm}
\usepackage{amssymb}

\theoremstyle{plain}
\newtheorem{proposition}{Proposition}
\newtheorem{corollary}{Corollary}

\usepackage{graphicx}
\usepackage{natbib}
\usepackage{doi}
\usepackage{amsmath}
\usepackage{algorithm}
\usepackage{algpseudocode}
\usepackage{caption}
\usepackage{subcaption}

\newcommand{\Var}{\mathrm{Var}}
\newcommand{\Cov}{\mathrm{Cov}}
\newcommand{\tr}{\mathrm{Tr}}
\newcommand{\rank}{\mathrm{Rank}}

\title{Bias Correction for Relative Importance Measures via Doubly Stochastic Reallocation}

\author{ \href{https://orcid.org/0009-0002-4187-513X}{\includegraphics[scale=0.06]{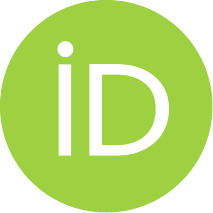}\hspace{1mm}Tien-En Chang} \\
	Institute of Industrial Engineering\\
	National Taiwan University\\
	Taipei, Taiwan \\
	\texttt{f09622009@ntu.edu.tw} \\
	\And
	\href{https://orcid.org/0000-0002-7951-9950}{\includegraphics[scale=0.06]{orcid.pdf}\hspace{1mm}Argon Chen}\thanks{Corresponding author.} \\
	Institute of Industrial Engineering\\
	National Taiwan University\\
	Taipei, Taiwan \\
	\texttt{achen@ntu.edu.tw} \\
}

\date{}

\renewcommand{\headeright}{}
\renewcommand{\shorttitle}{}
\newcommand{\ind}{\mathrel{\perp\!\!\!\perp}}

\begin{document}
\maketitle

\begin{abstract}

Relative importance (RI) analysis quantifies each predictor's contribution to the explained variance of a linear model. General Dominance (GD), a widely used benchmark, requires evaluating $2^p-1$ sub-models and becomes computationally intensive as the number of predictors $p$ grows. \textit{Orthogonalization-Reallocation Measures} (ORMs), including Relative Weights (RW) and the Green--Carroll--DeSarbo measure (GCD), provide efficient alternatives by assigning importance to orthogonalized predictors and reallocating it to the original predictors. Each, however, has a structural limitation: RW exhibits a \textit{leveling problem} that compresses differences among predictor importance values, whereas GCD exhibits an \textit{a priori bias} that systematically favors certain predictors before a response is observed. We show that this bias is governed by the row-sums of the reallocation matrix. A closed-form analysis under compound symmetry relates the reallocations underlying GCD and RW to a GD-based benchmark, showing that homogeneous multicollinearity alone does not induce an a priori bias and formalizing RW's leveling problem as excess shrinkage relative to the benchmark. We correct GCD's bias by mapping its reallocation matrix to a doubly stochastic matrix using the Method of Alternating Projections (MAP) and the Sinkhorn--Knopp (SK) algorithm, yielding GCD-MAP and GCD-SK. Comprehensive simulations show that GCD-SK removes the structural row-sum bias, substantially improves upon GCD, and often outperforms RW when the first principal component is dominant. We conclude with empirical guidelines for selecting among the measures.

\end{abstract}

\keywords{relative importance analysis \and general dominance \and orthogonalization-reallocation measures \and a priori bias \and Sinkhorn--Knopp algorithm}

\section{Introduction}
Relative importance (RI) analysis is critical for model interpretation in statistical modeling. Originating in quantitative behavioral and psychological research, RI defines a predictor's importance as its proportionate contribution to the explained variance of the response~\citep{johnson2004history,gromping2015variable}. Beyond mathematical attribution, RI allows decision makers to prioritize scarce organizational resources~\citep{johnson2004history}. For instance, in human resources, job supervisors utilize RI to determine which task or contextual performance dimensions are the primary drivers of overall job performance~\citep{johnson2001relative}. Similarly, in marketing analytics, practitioners rely on RI to infer which customer satisfaction attributes most significantly influence the overall customer experience~\citep{garver2020utilizing}.

Assessing importance is straightforward when predictors are uncorrelated, as simple metrics like squared marginal correlations or squared standardized regression coefficients suffice. However, the analysis becomes challenging in the presence of multicollinearity. When predictors are correlated, their contributions to the response partially overlap, rendering simple metrics ambiguous. A standard method to address this challenge is General Dominance (GD)~\citep{budescu1993dominance,azen2003dominance}, which provides a definitive solution by averaging a predictor's incremental contribution across all possible combinatorial sub-models. Due to its comprehensiveness and theoretical
rigor, GD is generally regarded as the most plausible measure of relative importance. However, its exponentially increasing complexity, requiring the evaluation of $2^p-1$ models for $p$ predictors, makes it computationally expensive when $p$ grows.

This computational barrier of GD has led to the development of \textit{Orthogonalization-Reallocation Measures} (ORMs), a class of efficient alternatives. These methods first transform correlated predictors into orthogonal predictors whose importance is unambiguous due to their orthogonality, and then reallocate the importance of orthogonal predictors back to the original predictors. Examples include importance weights ($w^2$)~\citep{johnson1966minimal}, the Green--Carroll--DeSarbo measure (GCD; $\delta^2$)~\citep{green1978new}, and Relative Weights (RW; $\varepsilon$)~\citep{johnson2000heuristic}. Among these, RW has gained popularity as a reliable and computationally efficient alternative to GD~\citep{leBreton2004monte,chao2008quantifying}. Consequently, RW has been applied across various contexts, including logistic regression~\citep{tonidandel2010determining} and multivariate regression~\citep{lebreton2008multivariate,hong2012dominance}. The statistical significance of RW has been examined in depth by~\cite{tonidandel2009determining}, while its extension to higher-order regression models is explored by~\cite{lebreton2013residualized}. More recently, RW has been adapted to high-dimensional settings, where the number of predictors exceeds the sample size, showing competitive performance compared to methods such as the lasso~\citep{tibshirani1996regression} in variable selection tasks~\citep{shen2020comprehensive,chang2026variable}. 

Despite this widespread adoption, the theoretical foundations of ORMs remain contested. Notably,~\cite{thomas2014johnson} argued that the original derivations of both RW and GCD are conceptually flawed and cautioned against their use, notwithstanding their close empirical agreement with GD. The ORM framework of~\cite{chang2025understanding} takes a different perspective: it sets aside the original rationales and views these methods simply as a common orthogonalization (Johnson's minimal transformation) paired with different reallocation rules---correlation-based for RW and regression-based for GCD---asking instead when and why such combinations approximate GD, a question that, beyond simulation evidence, the RI literature has largely left unexamined. From this viewpoint, the empirical success of ORMs is something to be explained, and their failures something to be located in specific predictor correlation structures, rather than grounds for dismissal.

Within this framework, two structural limitations have been identified~\citep{chang2025understanding}. RW suffers from the \textit{leveling problem}, which compresses differences among predictor importance values when the first principal component is dominant. GCD exhibits the \textit{a priori bias}, systematically favoring particular predictors before the response is observed---driven by row-sum imbalance in its reallocation matrix. The latter also carries a practical cost: ever since GCD was found to underperform~\citep{johnson2000heuristic}, applied research has relied almost exclusively on RW, leaving regression-based reallocation effectively unused even where leveling undermines RW. Correcting the a priori bias thus restores a competitive regression-based alternative precisely where RW is weakest. This paper makes three contributions:

\begin{itemize}
    \item We provide a self-contained formal characterization of the a priori bias. We show that the expected normalized importance assigned by an ORM is determined by the row-sums of its reallocation matrix, and relate GCD's row-sums to predictor Variance Inflation Factors (VIFs). In contrast, GD assigns every predictor an expected normalized importance of $1/p$. For GCD, we further show that these row-sums are weighted averages of predictor VIF ratios.
    \item We derive closed-form expressions for the GD-based (GDA), regression-based (RegPA, underlying GCD), and correlation-based (CorPA, underlying RW) reallocations under compound symmetry. This analytical benchmark shows that severe but homogeneous multicollinearity does not induce an a priori bias in GCD, while CorPA applies stronger shrinkage toward equal importance than GDA, providing an analytical illustration of RW's leveling problem.
    \item We correct the a priori bias by mapping the RegPA matrix to a nearby doubly stochastic matrix. We consider the Method of Alternating Projections (MAP)~\citep{von1949rings} and the Sinkhorn--Knopp (SK) algorithm~\citep{sinkhorn1967concerning}, yielding the corrected measures GCD-MAP and GCD-SK. Comprehensive simulations show that GCD-SK substantially improves upon uncorrected GCD and provides a robust alternative to RW across diverse predictor correlation structures.
\end{itemize}

The remainder of this paper is organized as follows. Section~\ref{sec: RI review} reviews GD and the ORM framework. Section~\ref{sec: a priori bias} characterizes the a priori bias, derives closed-form ORM expressions under compound symmetry, develops the proposed corrections, and provides a numerical illustration. Section~\ref{sec: simulations} evaluates the methods through comprehensive Monte Carlo simulations and presents practical guidelines. Finally, Section~\ref{sec: discussion and conclusion} summarizes the findings and discusses future research directions.
\section{Relative Importance Measures}\label{sec: RI review}
Let $X = [x_1, \ldots, x_p]$ be an $n \times p$ matrix of predictors and $y$ an $n \times 1$ response vector. We assume that $X$ has full column rank and that all predictors $x_i$ and the response $y$ are normalized to have zero mean and unit $\ell_2$ norm. 
\subsection{General Dominance}
General Dominance~\citep{budescu1993dominance,azen2003dominance} quantifies a predictor's importance by averaging its incremental contribution in predicting $y$ across all possible combinatorial sub-models, providing a comprehensive perspective. Specifically, the squared multiple correlation $R_{y\cdot X}^2$ is used to assess the contribution when all predictors $X$ are used in predicting $y$. Formally, GD of $x_i$ (with respect to $y$), denoted by $D_i$, can be defined as the weighted average of all possible increments:
\begin{equation}
    D_i=\frac{1}{p}\sum_{S\subseteq P\setminus\{i\}}\frac{1}{\binom{p-1}{|S|}}\left(R_{y\cdot X_{S\cup \{i\}}}^2-R_{y\cdot X_S}^2\right)
    \label{eq:GD definition}
\end{equation}
where $P=\{1,\ldots,p\}$ is a predictor index set, $S$ is a subset of predictor indices, $|\cdot|$ denotes the cardinality of a set, $S\subseteq P\setminus\{i\}$ denotes all possible subsets excluding the index of predictor $i$, and $X_S$ denotes a subset of predictors corresponding to indices $S$. It is worth noting that the concept of GD coincides with the Shapley value~\citep{shapley1953game}, which originates from game theory and has recently been used in the context of explainable machine learning. However, GD becomes computationally intensive as the number of predictors increases because it requires calculations for $2^p-1$ sub-models.

\subsection{Orthogonalization-Reallocation Measures}
\cite{chang2025understanding} introduced a unified framework for Orthogonalization-Reallocation Measures, decomposing them into two sequential steps: (1) Orthogonalization, and (2) Reallocation. In the first step, the original predictor matrix $X$ is transformed into an orthogonal predictor matrix $\tilde{Z}=[\tilde{z}_1,\ldots,\tilde{z}_p]$, satisfying $\tilde{Z}^\top\tilde{Z}=I_p$. Because the orthogonal predictors are uncorrelated, their individual contributions to the response $y$ are unambiguous and are quantified using their squared marginal correlations $\rho_{\tilde{z}_jy}^2$. These values define the importance of each orthogonal predictor.

In the second step, the importance of the orthogonal predictors is reallocated back to the original predictors. This process is governed by a reallocation matrix $A^{\tilde{Z}}\in\mathbb{R}^{p\times p}$, where each entry $a_{ij}^{\tilde{Z}}$ represents the proportion of importance from $\tilde{z}_j$ attributed to $x_i$. The resulting ORM importance estimate for each $x_i$ is given by:
\begin{equation}
    D_{A,i}^{\tilde{Z}}=\sum_{j=1}^p a_{ij}^{\tilde{Z}}\rho_{\tilde{z}_jy}^2\quad\text{and}\quad D_{A}^{\tilde{Z}}=[D_{A,1}^{\tilde{Z}},\ldots, D_{A,p}^{\tilde{Z}}]^\top.
    \label{eq:ORM}
\end{equation}

\paragraph{Orthogonalization Methods} Several orthogonalization methods are available, including principal component analysis (PCA), Gram--Schmidt, and the minimal transformation proposed by~\cite{johnson1966minimal}. The minimal transformation, in particular, minimizes the squared distance between the orthogonal predictors and the original predictors:
\begin{equation}
    O_J(\tilde{Z})=\tr\left((\tilde{Z}-X)^\top(\tilde{Z}-X)\right),
    \label{eq: objective function for the minimal transformation}
\end{equation}
where $\tr(\cdot)$ denotes the trace of a matrix. The minimizer of this optimization problem is 
\begin{equation}
    Z=X(X^\top X)^{-1/2}.
    \label{eq: the minimal transformation to orthonormality}
\end{equation}

\paragraph{Reallocation Methods} The reallocation step determines how the importance of orthogonal predictors is redistributed back to the original predictors. Based on the unified ORM formulation in Eq.~\eqref{eq:ORM},~\cite{chang2025understanding} summarized several reallocation methods, described below. These methods are defined for an arbitrary orthogonal predictor matrix $\tilde{Z}$:
\begin{itemize}
    \item Identity reallocation (IdA):
    \begin{align}
        a_{ij}^{\tilde{Z}}=\mathrm{IdA}_{ij}^{\tilde{Z}}=\begin{cases}
            1, &\text{if }i=j, \\
            0, &\text{otherwise.}
        \end{cases}
        \label{eq: IdA}
    \end{align}
    This method simply retains the orthogonal predictor's importance without redistribution. 

    \item Regression-based proportional reallocation (RegPA):
    \begin{align}
        a_{ij}^{\tilde{Z}}=\mathrm{RegPA}_{ij}^{\tilde{Z}}=\frac{\gamma^2_{\tilde{Z},ij}}{\sum_{k=1}^p \gamma^2_{\tilde{Z},kj}},
        \label{eq: RegPA}
    \end{align}
    where $\gamma_{\tilde{Z},ij}$ is the coefficient of $x_i$ in regressing $\tilde{z}_j$ on all $x_1,\ldots,x_p$. 

    \item Correlation-based proportional reallocation (CorPA):
    \begin{align}
        a_{ij}^{\tilde{Z}}=\mathrm{CorPA}_{ij}^{\tilde{Z}}=\frac{\ell^2_{\tilde{Z},ij}}{\sum_{k=1}^p \ell^2_{\tilde{Z},kj}},
        \label{eq: CorPA}
    \end{align}
    where $\ell_{\tilde{Z},ij}$ denotes the marginal (Pearson) correlation between $\tilde{z}_j$ and $x_i$. 

    \item GD-based reallocation (GDA):
    \begin{equation}
        a_{ij}^{\tilde{Z}}=\mathrm{GDA}_{ij}^{\tilde{Z}}=D_i(\tilde{z}_j, X),
        \label{eq: GDA}
    \end{equation}
    where $D_i(\tilde{z}_j, X)$ is the GD value of $x_i$ in regressing $\tilde{z}_j$ on all $x_1,\ldots,x_p$. This method was proposed by~\cite{chang2025understanding} as a theoretically ideal reallocation, as it best captures the contribution of $x_i$ to the variance explained in $\tilde{z}_j$. While GDA is computationally intractable for large $p$, it serves as a benchmark for evaluating the performance of practical ORM methods. 
\end{itemize}

\paragraph{Desirable Properties of Reallocation Matrices} 
Under the ORM framework, a reallocation matrix $A^{\tilde{Z}}$ should satisfy several key properties:
\begin{enumerate}
    \item Non-negativity: 
    \begin{equation}
        a_{ij}^{\tilde{Z}}\geq 0,\,\forall i,j=1,\ldots,p.
        \label{eq: ReA positive}
    \end{equation}
    This ensures interpretability, as negative allocations of importance are not meaningful.
    \item Column-sums equal to one: 
    \begin{equation}
        \sum_{i=1}^p a_{ij}^{\tilde{Z}}=1,\,\forall j=1,\ldots,p.    
        \label{eq: ReA col-sum 1}
    \end{equation}
    This ensures that the total importance of each orthogonal predictor is fully distributed across the original predictors, i.e., $\sum_{i=1}^{p}D_{A,i}^{\tilde{Z}}=\sum_{i=1}^p(\sum_{j=1}^p a_{ij}^{\tilde{Z}}\rho^2_{\tilde{z}_jy})=\sum_{j=1}^p\rho^2_{\tilde{z}_jy}=R^2_{y\cdot X}$.
    \item Row-sums equal to one: 
    \begin{equation}
        \sum_{j=1}^p a_{ij}^{\tilde{Z}}=1,\,\forall i=1,\ldots,p.
        \label{eq: ReA row-sum 1}
    \end{equation}  
    This condition will be shown in the next section to prevent an ORM from systematically favoring particular predictors on average across possible responses.
\end{enumerate}  
Non-negativity ensures that the reallocations remain interpretable, whereas unit column-sums preserve the total explained variance. As shown in the next section, unit row-sums characterize the absence of the a priori bias under the response-generating framework considered here. A nonnegative matrix with unit row- and column-sums is doubly stochastic.

Empirical studies by~\cite{chang2025understanding} show that the minimal transformation $Z$ yields the most robust results across a variety of settings. Therefore, this paper will adopt the minimal transformation $Z$ as the default orthogonalization method. For notational simplicity, we drop the superscript $\tilde{Z}$ when referring to the reallocation matrix and ORM. When paired with the minimal transformation, the most common ORMs adopted in the literature are summarized in Table~\ref{tab: existing ORMs}.
\begin{table}[ht]
    \centering
    \caption{Existing ORMs using the minimal transformation $Z$ as the orthogonalization method.}
    \label{tab: existing ORMs}
    \begin{tabular}{|c|c|c|}
    \hline
       Reallocation Method  & Corresponding ORM & Reference \\ \hline
       IdA  & $w^2$ &~\cite{johnson1966minimal,chang2026variable} \\ \hline
       RegPA & GCD; $\delta^2$ &~\cite{green1978new} \\ \hline
       CorPA & RW; $\varepsilon$ &~\cite{johnson2000heuristic,genizi1993decomposition,shen2020comprehensive} \\ \hline
    \end{tabular}
\end{table}
\section{The A Priori Bias and Its Correction}\label{sec: a priori bias}
This section first characterizes the a priori bias in terms of the expected importance assigned across possible responses and relates GCD's bias to VIF-induced row-sum imbalance in the RegPA matrix. We then use compound symmetry as an analytical benchmark to distinguish the a priori bias from multicollinearity itself and to compare the reallocation behavior of GDA, RegPA, and CorPA. Finally, we develop two corrections based on double stochasticity and illustrate their effects numerically.
\subsection{Characterization of the A Priori Bias}\label{subsec:a-priori-bias}

We characterize the a priori bias by considering expected importance across all possible systematic responses while holding the predictor correlation structure fixed. Throughout this section, a subscript $0$ denotes a random quantity. We first express the response in terms of an orthogonal basis:
\begin{equation}
    y_0=z_0^\top u_0+\epsilon_0,
    \label{eq:linear model (z)}
\end{equation}
where $z_0\sim N_p(0,I_p)$, $\epsilon_0\sim N(0,\sigma^2)$, and $z_{0i}\ind\epsilon_0$ for all $i$. The coefficient vector \[ u_0\sim\mathcal{U}(S^{p-1}), \qquad S^{p-1} = \left\{ u\in\mathbb{R}^p: \sum_{i=1}^p u_i^2=1 \right\}, \] is independent of $(z_0,\epsilon_0)$ and uniformly distributed on the unit sphere. Thus, the systematic component $\hat{y}_0=z_0^\top u_0$ ranges uniformly over all directions in the predictor space, while $\epsilon_0$ adds independent noise.

To obtain the corresponding model with correlated predictors, let $\Sigma\in\mathbb{R}^{p\times p}$ be a positive-definite predictor correlation matrix, and let $\Sigma^{1/2}$ denote its symmetric positive-definite square root. Define $x_0=\Sigma^{1/2}z_0$ and $\beta_0=\Sigma^{-1/2}u_0$ and the model becomes:
\begin{equation}
    y_0=x_0^\top\beta_0+\epsilon_0.
    \label{eq:linear model (x)}
\end{equation}
Here, $\beta_0$ is a derived random variable whose distribution is induced by the uniform distribution of $u_0\sim \mathcal{U}(S^{p-1})$. We formally write $\beta_0\sim F$, where $F$ denotes the distribution resulting from the transformation $\beta_0=\Sigma^{-1/2}u_0$. This setup of the relationship between $y_0$ and $x_0$ via $z_0$ is critical to ensure that the resulting systematic responses $\hat{y}_0=x_0^\top\beta_0=z^\top_0 u_0$ are uniformly distributed in the predictor space while having the correlation structure of $x_0$ specified by any $\Sigma$.

The RI measures discussed in the previous section are determined by the predictor correlation structure and the predictor--response correlations (equivalently, RI measures are functions of $\Sigma$ and $\beta_0$). To place all response realizations on a common scale, we normalize each RI vector by the full-model coefficient of determination $R^2_{y_0\cdot x_0}$. Accordingly, throughout this section, $D_i(\Sigma,\beta_0)$ and $D_{A,i}(\Sigma,\beta_0)$ denote the normalized GD and ORM importance values, respectively, each summing to one across predictors.

Considering all possible responses for a fixed set of predictors is particularly important in many practical contexts, such as behavioral science, where the same predictors are often used to analyze multiple responses. In this setting, an RI method should not systematically favor particular predictors before a response is observed. This principle motivates the concept of the a priori bias introduced by~\cite{chang2025understanding}.

\begin{proposition}[Expected normalized importance] \label{prop:expected-importance} 
Under the model in Eqs.~\eqref{eq:linear model (z)} and~\eqref{eq:linear model (x)}, the expected normalized GD importance satisfies 
\begin{equation} 
    \mathbb{E}_{\beta_0\sim F} \left[ D_i(\Sigma,\beta_0) \right] = \frac{1}{p}, \qquad i=1,\ldots,p. 
    \label{eq:expectation of GD} 
\end{equation} 
Moreover, let $A\in\mathbb{R}^{p\times p}$ be a fixed, response-independent reallocation matrix of an ORM based on a standardized orthogonal representation of the predictor space. Then 
\begin{equation} 
    \mathbb{E}_{\beta_0\sim F} \left[ D_{A,i}(\Sigma,\beta_0) \right] = \frac{1}{p}\sum_{j=1}^p a_{ij}, \qquad i=1,\ldots,p. \label{eq:expectation of ORMs} 
\end{equation} 
\end{proposition}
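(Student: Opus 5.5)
The plan is to treat the two claims separately, using the population version of all quantities. Here $\Sigma$ is the predictor correlation matrix and $\beta_0=\Sigma^{-1/2}u_0$. The population $R^2$ of the systematic part is $\beta_0^\top\Sigma\beta_0 = u_0^\top u_0 = 1$, scaled by $1/(1+\sigma^2)$. Normalization by $R^2_{y_0\cdot x_0}$ removes this scale, so the noise term $\epsilon_0$ plays no role and $D_i(\Sigma,\beta_0)$, $D_{A,i}(\Sigma,\beta_0)$ are functions of $u_0$ alone.

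\textbf{ORM part.} For the ORM part I take the standardized orthogonal predictors to be $z_0$ itself, or any orthonormal rotation of it, such as the population analogue of the minimal transformation $Z$. In the $z$-coordinates the squared correlation of $\tilde z_j$ with $y_0$ is $u_{0j}^2/(1+\sigma^2)$. Summing over $j$ gives $1/(1+\sigma^2)=R^2_{y_0\cdot x_0}$. Hence the normalized importance is
\[
D_{A,i}(\Sigma,\beta_0)=\sum_{j=1}^p a_{ij}u_{0j}^2 .
\]
If a different orthonormal basis $Q z_0$ is used, with $Q$ orthogonal, then $Qu_0$ is again uniform on $S^{p-1}$, so without loss of generality the coordinates are those of the basis on which $A$ acts. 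Since $A$ is response-independent, linearity gives $\mathbb{E}[D_{A,i}]=\sum_j a_{ij}\,\mathbb{E}[u_{0j}^2]$. By exchangeability of the coordinates of a uniform vector on the sphere together with $\sum_j u_{0j}^2=1$, we have $\mathbb{E}[u_{0j}^2]=1/p$. This yields Eq.~\eqref{eq:expectation of ORMs}.

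\textbf{GD part.} The coordinate trick fails here because GD is not linear in the $u_{0j}^2$. Instead I use the Shapley structure in Eq.~\eqref{eq:GD definition}. For a subset $S$, the population $R^2$ of $y_0$ on $x_S$, divided by the full $R^2$, equals
\[
\beta_0^\top \Sigma_{\cdot S}\Sigma_{SS}^{-1}\Sigma_{S\cdot}\beta_0 = u_0^\top \Sigma^{1/2}_{\cdot S}\Sigma_{SS}^{-1}\Sigma^{1/2}_{S\cdot}u_0 = u_0^\top P_S u_0 ,
\]
where $P_S$ is the orthogonal projection onto the column span of $\Sigma^{1/2}_{\cdot S}$, a subspace of dimension $|S|$. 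Using $\mathbb{E}[u_0u_0^\top]=I_p/p$, which again follows from rotational invariance of the uniform distribution, we obtain $\mathbb{E}[u_0^\top P_S u_0]=\tr(P_S)/p=|S|/p$.

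The GD formula is linear in the sub-model $R^2$ values, and the normalizing full $R^2$ is the constant $1$ in $u$-units. Therefore the expected normalized GD of $x_i$ equals the GD computed from the expected characteristic function $v(S)=|S|/p$. This function is additive, so every increment $v(S\cup\{i\})-v(S)$ equals $1/p$. The Shapley weights in Eq.~\eqref{eq:GD definition} sum to one, which gives exactly $1/p$.

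\textbf{Main obstacle.} The main point to watch is the normalization step. I must confirm that dividing by $R^2_{y_0\cdot x_0}$ is dividing by a deterministic constant. In the population model this holds because $u_0^\top u_0=1$, which is exactly why the paper samples $u_0$ on the sphere. Once that is settled, expectation commutes with the linear GD and ORM formulas. The remaining ingredient is simply $\mathbb{E}[u_0u_0^\top]=I_p/p$.
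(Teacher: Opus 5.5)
Your proposal is correct and follows essentially the same route as the paper. You remove the deterministic factor $1/(1+\sigma^2)$ by normalization and write each normalized sub-model value as $u_0^\top P_S u_0$, where $P_S$ is a rank-$|S|$ orthogonal projection; your column span of $\Sigma^{1/2}_{\cdot S}$ is the paper's row space of $B_S$, since $\Sigma^{1/2}$ is symmetric. You then combine $\mathbb{E}[u_0u_0^\top]=I_p/p$ with linearity of the Shapley formula, and you reduce the ORM case to $\sum_j a_{ij}\,\mathbb{E}[\tilde u_{0j}^2]$ through an orthogonal change of basis.
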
 


The proof of Proposition~\ref{prop:expected-importance} is provided in Appendix~\ref{app:proof-expected-importance}. The proposition shows that the expected normalized importance assigned by an ORM is determined directly by the row-sums of its reallocation matrix. If the $i$-th row-sum is greater than one, the expected importance of $x_i$ exceeds the unbiased benchmark $1/p$; if it is less than one, the expected importance falls below $1/p$. Thus, the predictor correlation structure can predetermine which predictors are favored on average before any particular response is observed, giving rise to the a priori bias.

This result identifies unit row-sums as the condition for eliminating the a priori bias. We next examine why this condition can fail for GCD. Among the ORMs listed in Table~\ref{tab: existing ORMs}, GCD is the only method whose reallocation matrix does not generally have unit row-sums. The following proposition relates this failure to predictor VIFs under Johnson's minimal transformation, formalizing the mechanism identified by~\cite{chang2025understanding}.

\begin{proposition}[VIF-induced a priori bias of GCD]
\label{prop:gcd-vif-bias}
Let $z_0=\Sigma^{-1/2}x_0$ be Johnson's minimal transformation, and let
\[
    \Gamma_Z=(\gamma_{ij})_{i,j=1}^p=\Sigma^{-1/2},
    \qquad
    \mathrm{VIF}_j=(\Sigma^{-1})_{jj}.
\]
Then the RegPA weights used by GCD satisfy
\begin{equation}
    \mathrm{RegPA}_{ij}
    =
    \frac{\gamma_{ij}^2}{\mathrm{VIF}_j}.
    \label{eq:regpa-vif}
\end{equation}

Let $r_i=\sum_{j=1}^p\mathrm{RegPA}_{ij}$ denote the $i$-th row-sum and define
\[
    w_{ij}
    =
    \frac{\gamma_{ij}^2}{\mathrm{VIF}_i}.
\]
Then $w_{ij}\in[0,1]$, $\sum_{j=1}^p w_{ij}=1$, and
\begin{equation}
    r_i
    =
    \sum_{j=1}^p
    w_{ij}
    \frac{\mathrm{VIF}_i}{\mathrm{VIF}_j}.
    \label{eq:regpa-row-sum-vif}
\end{equation}

If $k\in\arg\max_{i=1,\ldots,p}\mathrm{VIF}_i$, then $r_k\geq1$, with strict inequality whenever there exists some $j$ such that $\gamma_{kj}\neq0$ and $\mathrm{VIF}_j<\mathrm{VIF}_k$. In that case,
\[
    \mathbb{E}_{\beta_0\sim F}
    \left[
        D_{\mathrm{RegPA},k}(\Sigma,\beta_0)
    \right]
    =
    \frac{r_k}{p}
    >
    \frac{1}{p}.
\]
Moreover, because RegPA is column-stochastic, $\sum_{i=1}^p r_i=p$. Hence, if $r_k>1$, then $r_\ell<1$ for at least one $\ell\neq k$, implying
\[
    \mathbb{E}_{\beta_0\sim F}
    \left[
        D_{\mathrm{RegPA},\ell}(\Sigma,\beta_0)
    \right]
    <
    \frac{1}{p}.
\]
\end{proposition}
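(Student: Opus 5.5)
The plan is to work entirely at the population level, writing $X^\top X=\Sigma$ because the predictors are standardized. For Johnson's minimal transformation $Z=X\Sigma^{-1/2}$, regressing $z_j$ on $X$ is exact, since $z_j$ already lies in the column space of $X$. The coefficient vector is therefore the $j$-th column of $\Sigma^{-1/2}$, so $\gamma_{ij}=(\Sigma^{-1/2})_{ij}$, and this matrix is symmetric. The column normalizer in Eq.~\eqref{eq: RegPA} is
\[
\sum_k\gamma_{kj}^2=(\Sigma^{-1/2}\Sigma^{-1/2})_{jj}=(\Sigma^{-1})_{jj}=\mathrm{VIF}_j,
\]
using the standard identity that the diagonal of the inverse correlation matrix gives the VIFs. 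Substituting this normalizer into Eq.~\eqref{eq: RegPA} yields Eq.~\eqref{eq:regpa-vif}.

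For the weights, symmetry of $\Sigma^{-1/2}$ gives $\sum_j\gamma_{ij}^2=\sum_j\gamma_{ji}^2=\mathrm{VIF}_i$. Hence $w_{ij}\ge 0$, $\sum_j w_{ij}=1$, and in particular $w_{ij}\le 1$. The row sum then follows from the algebraic rewrite
\[
r_i=\sum_j \frac{\gamma_{ij}^2}{\mathrm{VIF}_j}=\sum_j \frac{\gamma_{ij}^2}{\mathrm{VIF}_i}\cdot\frac{\mathrm{VIF}_i}{\mathrm{VIF}_j},
\]
which is Eq.~\eqref{eq:regpa-row-sum-vif}.

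For the maximizer $k$, every ratio $\mathrm{VIF}_k/\mathrm{VIF}_j$ is at least $1$, so $r_k$, a convex combination of these ratios, is at least $1$. For strictness, consider an index $j$ with $\gamma_{kj}\ne0$ and $\mathrm{VIF}_j<\mathrm{VIF}_k$. It carries positive weight $w_{kj}>0$ and a ratio strictly greater than $1$. Since $\sum_j w_{kj}=1$, we can write
\[
r_k-1=\sum_j w_{kj}\left(\frac{\mathrm{VIF}_k}{\mathrm{VIF}_j}-1\right),
\]
a sum of nonnegative terms with at least one strictly positive term, so $r_k>1$. The expectation statement is then Eq.~\eqref{eq:expectation of ORMs} of Proposition~\ref{prop:expected-importance}. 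That result applies because the RegPA matrix depends only on $\Sigma$ and not on the response, and the minimal transformation is a standardized orthogonal representation.

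Finally, RegPA has unit column sums by construction, so $\sum_i r_i=\sum_j\sum_i \mathrm{RegPA}_{ij}=p$. If $r_k>1$, the remaining $p-1$ row sums total less than $p-1$, so some $r_\ell<1$ with $\ell\ne k$. Applying Proposition~\ref{prop:expected-importance} once more gives expected normalized importance $r_\ell/p<1/p$ for that predictor.

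I expect no step to be a real obstacle. The only point needing care is the first step: identifying the regression coefficients exactly as $\Sigma^{-1/2}$, which is where the normalization $X^\top X=\Sigma$ is used, and invoking symmetry so that row and column norms of $\Gamma_Z$ both equal the corresponding VIFs.
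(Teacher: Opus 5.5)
Your proposal is correct and follows essentially the same argument as the paper: you identify $\Gamma_Z=\Sigma^{-1/2}$, use its symmetry so that both row and column squared norms equal the VIFs, write $r_i$ as a convex combination of VIF ratios, and then invoke Proposition~\ref{prop:expected-importance} together with $\sum_i r_i=p$. The only cosmetic difference is how the coefficients are obtained. You read them off the exact sample-level representation $z_j=X(\Sigma^{-1/2})_{\cdot j}$, whereas the paper computes $\Sigma^{-1}\Cov(x_0,z_0)=\Sigma^{-1/2}$ at the population level; these are the same calculation.
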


The proof of Proposition~\ref{prop:gcd-vif-bias} is provided in Appendix~\ref{app:proof-gcd-vif-bias}. Equation~\eqref{eq:regpa-row-sum-vif} shows that each RegPA row-sum is a weighted average of the VIF ratios $\mathrm{VIF}_i/\mathrm{VIF}_j$. For a maximal-VIF predictor,
\begin{equation}
    r_k-1
    =
    \sum_{j=1}^p
    w_{kj}
    \left(
        \frac{\mathrm{VIF}_k}{\mathrm{VIF}_j}-1
    \right).
    \label{eq:max-vif-excess-row-sum}
\end{equation}
Thus, large VIF contrasts can produce substantial overestimation when the corresponding weights $w_{kj}$ are non-negligible. Because the row-sums total $p$, this overestimation must be offset by underestimation of other predictors. This structural imbalance explains why severe and heterogeneous multicollinearity can degrade GCD in both quantitative accuracy and ranking consistency. This mechanism also helps explain earlier empirical findings, such as those of~\cite{johnson2000heuristic}, in which GCD was found to underperform. These results identify the lack of row stochasticity in RegPA as the structural source of GCD's a priori bias. 
\subsection{Compound Symmetry as an Analytical Benchmark}
\label{subsec:compound-symmetry}

The preceding results show how VIF contrasts can induce row-sum imbalance in RegPA. To distinguish this mechanism from multicollinearity itself, we now consider compound symmetry, under which all predictors have identical correlation structures and identical VIFs. This setting provides a tractable benchmark for comparing the reallocation behavior of GDA, RegPA, and CorPA. For this analytical comparison, we return to the standardized
design-matrix notation of Section~\ref{sec: RI review}, with
$\Sigma=X^\top X$.

\begin{proposition}[Reallocation matrices under compound symmetry]
\label{prop:compound-symmetry}
Let $p\geq3$ and suppose that
\begin{equation}
    \Sigma
    =
    (1-\rho)I_p+\rho J_p,
    \qquad
    -\frac{1}{p-1}<\rho<1,
    \label{eq:compound-symmetry}
\end{equation}
where $1_p$ is the $p$-vector of ones and $J_p=1_p1_p^\top$. Under Johnson's minimal transformation $Z=X\Sigma^{-1/2}$, define
\[
    \tau_\rho
    =
    \frac{
        \sqrt{1+(p-1)\rho}-\sqrt{1-\rho}
    }{p},
    \qquad
    h_\rho=1+(p-2)\rho,
\]
and
\[
    \kappa_\rho
    =
    \frac{1}{p-1}
    \sum_{s=0}^{p-2}
    \frac{1}{1+s\rho}.
\]
Then
\begin{align}
    \mathrm{CorPA}
    &=
    (1-p\alpha_C)I_p+\alpha_CJ_p,
    &
    \alpha_C
    &=
    \tau_\rho^2,
    \nonumber\\
    \mathrm{GDA}
    &=
    (1-p\alpha_G)I_p+\alpha_GJ_p,
    &
    \alpha_G
    &=
    \kappa_\rho\tau_\rho^2,
    \label{eq:cs-reallocation-matrices}\\
    \mathrm{RegPA}
    &=
    (1-p\alpha_R)I_p+\alpha_RJ_p,
    &
    \alpha_R
    &=
    {\tau_\rho^2}/{h_\rho}.
    \nonumber
\end{align}
Thus, each matrix has diagonal entries $1-(p-1)\alpha_A$ and off-diagonal entries $\alpha_A$, where $A\in\{C,G,R\}$. In particular, for every $i\neq j$,
\begin{equation}
    \mathrm{GDA}_{ij}
    =
    \kappa_\rho\,\mathrm{CorPA}_{ij}
    =
    h_\rho\kappa_\rho\,\mathrm{RegPA}_{ij}.
    \label{eq:cs-off-diagonal-relations}
\end{equation}
When $\rho=0$, all three matrices reduce to $I_p$.
\end{proposition}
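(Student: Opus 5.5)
The proof uses the spectral structure of compound symmetry. Write $P=J_p/p$ for the projection onto $1_p$ and $Q=I_p-P$. Then $\Sigma=\lambda_1P+\lambda_2Q$ with $\lambda_1=1+(p-1)\rho$ and $\lambda_2=1-\rho$. Every matrix function of $\Sigma$ has the form $f(\lambda_1)P+f(\lambda_2)Q$, that is, $cI_p+dJ_p$.

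\textbf{Square root and CorPA.} First,
\[
\Sigma^{1/2}=\sqrt{\lambda_2}\,I_p+\tau_\rho J_p .
\]
Since $Z=X\Sigma^{-1/2}$, we have $\mathrm{Corr}(x_i,z_j)=(X^\top Z)_{ij}=(\Sigma^{1/2})_{ij}$. So $\ell_{ij}=\tau_\rho$ for $i\neq j$ and $\ell_{ii}=\sqrt{\lambda_2}+\tau_\rho$. Each column of $\Sigma^{1/2}$ has squared norm $(\Sigma)_{jj}=1$, so no normalisation is needed. This gives $\mathrm{CorPA}_{ij}=\tau_\rho^2$ off the diagonal.

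\textbf{RegPA.} We use Proposition~\ref{prop:gcd-vif-bias}, which gives $\mathrm{RegPA}_{ij}=\gamma_{ij}^2/\mathrm{VIF}_j$ with $\Gamma=\Sigma^{-1/2}$. The off-diagonal entry of $\Sigma^{-1/2}$ is $(\lambda_1^{-1/2}-\lambda_2^{-1/2})/p$. This equals $-\tau_\rho/\sqrt{\lambda_1\lambda_2}$, so $\gamma_{ij}^2=\tau_\rho^2/(\lambda_1\lambda_2)$. The common VIF is
\[
(\Sigma^{-1})_{jj}=\frac{1+(p-2)\rho}{\lambda_1\lambda_2}=\frac{h_\rho}{\lambda_1\lambda_2},
\]
using the standard inverse of a compound-symmetric matrix. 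Dividing gives $\alpha_R=\tau_\rho^2/h_\rho$.

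In all three cases the diagonal is forced by the fact that each column sums to one. By symmetry, all diagonal entries are equal and all off-diagonal entries are equal, which yields the form $(1-p\alpha)I_p+\alpha J_p$.

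\textbf{GDA (main obstacle).} We need the GD of $x_i$ when regressing $z_j$ on $X$, for $i\neq j$. The covariance of $z_j$ with the predictors is the $j$-th column of $\Sigma^{1/2}$: it equals $c=\sqrt{\lambda_2}+\tau_\rho$ for the target predictor $x_j$ and $\tau_\rho$ for every other predictor. By exchangeability, the sub-model $R^2$ depends only on the subset size $s$ and on whether $j$ belongs to the subset.

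For $S\not\ni j$ we have $R^2_S=\tau_\rho^2\,1_s^\top\Sigma_s^{-1}1_s=\tau_\rho^2 s/(1+(s-1)\rho)$.

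For $S\ni j$ we have $R^2_S=v^\top\Sigma_s^{-1}v$ with $v=\tau_\rho1_s+\sqrt{\lambda_2}e_j$. This splits into an eigen-part along $1_s$ and a part orthogonal to it.

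The increment from adding $x_i$ is then computed in the two cases ($j\in S$ or $j\notin S$) and averaged with the Shapley weights. I expect these increments to telescope. For instance, for $j\notin S$ with $|S|=s$,
\[
\tau_\rho^2\left[\frac{s+1}{1+s\rho}-\frac{s}{1+(s-1)\rho}\right]=\frac{\tau_\rho^2(1-\rho)}{(1+s\rho)(1+(s-1)\rho)} .
\]
The $j\in S$ case should reduce to a similar form involving $\sqrt{\lambda_2}$. The hope is that after Shapley averaging, in which each size $s$ receives weight $1/p$ split between the two cases in proportions depending on $s$, the combination collapses to $\tau_\rho^2\frac{1}{p-1}\sum_{s=0}^{p-2}(1+s\rho)^{-1}$.

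If direct telescoping is messy, a simpler route is to compute the GD of the target $x_j$ and then use efficiency. By symmetry all off-diagonal GD values are equal, and column-sums give $\mathrm{GDA}_{jj}=1-(p-1)\alpha_G$. It may also help to use that the Shapley value of $x_j$ relates to the increments through the partial correlation of $x_j$ and $z_j$ given $S$.

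\textbf{Consequences.} The ratios in \eqref{eq:cs-off-diagonal-relations} follow immediately from the three formulas. At $\rho=0$ we get $\tau_0=0$, so all three matrices equal $I_p$.
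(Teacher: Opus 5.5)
Your CorPA and RegPA computations are correct and follow the paper's own argument: the spectral form of $\Sigma^{\pm1/2}$, unit column norms of $\Sigma^{1/2}$, the common $\mathrm{VIF}_j=h_\rho/(\lambda_1\lambda_2)$, and column sums fixing the diagonal. The gap is in GDA, which is the substantive part of the proposition.

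Write $R_0(s)$ and $R_1(s)$ for the $R^2$ of $z_j$ on a size-$s$ subset without and with $x_j$, and $q_s=1+(s-1)\rho$. You compute $R_0(s)$ and $\Delta_0(s)=\tau_\rho^2\lambda_2/(q_sq_{s+1})$ correctly. For $j\in S$, however, you only say the increment ``should reduce to a similar form involving $\sqrt{\lambda_2}$''. It does not. With $t_s=\sqrt{\lambda_2}+s\tau_\rho$ and $R_1(s)=\lambda_2^{-1}\bigl[(\sqrt{\lambda_2}+\tau_\rho)^2+(s-1)\tau_\rho^2-\rho t_s^2/q_s\bigr]$, the increment is $\Delta_1(s)=\tau_\rho^2\lambda_1/(q_sq_{s+1})$, exactly $\lambda_1/\lambda_2$ times $\Delta_0(s)$.

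Getting there is the crux and is not automatic. Over the common denominator $q_sq_{s+1}$, the numerator becomes the perfect square $(\tau_\rho q_s-\rho t_s)^2$. This square is independent of $s$ only because of the identity $\rho=\tau_\rho(\sqrt{\lambda_1}+\sqrt{\lambda_2})$ (from $\lambda_1-\lambda_2=p\rho$), which gives $\tau_\rho q_s-\rho t_s=-\tau_\rho\sqrt{\lambda_1\lambda_2}$. Your proposal never finds this identity. Without it, nothing explains why the answer should involve only $\sum_s(1+s\rho)^{-1}$.

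The Shapley averaging is likewise only hoped for. A size-$s$ subset of $P\setminus\{i\}$ contains $j$ with probability $s/(p-1)$, so one needs
\[
\alpha_G=\frac{\tau_\rho^2}{p(p-1)}\sum_{s=0}^{p-1}\frac{s\lambda_1+(p-1-s)\lambda_2}{q_sq_{s+1}}.
\]
$\kappa_\rho$ appears only after the split $\frac{s\lambda_1+(p-1-s)\lambda_2}{q_sq_{s+1}}=\frac{p-1}{1+s\rho}+s\bigl(\frac{1}{q_s}-\frac{1}{q_{s+1}}\bigr)$ and an index shift in which the $1/\lambda_1$ boundary terms cancel.

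Your fallback through $\mathrm{GDA}_{jj}$ and efficiency is a legitimate alternative. It even avoids the mixing weights, since $\mathrm{GDA}_{jj}=\frac1p\sum_{s=0}^{p-1}\bigl[R_1(s+1)-R_0(s)\bigr]$. But it hinges on the same identity. For $\rho\neq0$, that identity puts the subset $R^2$ values in the summable forms $R_0(s)=\frac{\tau_\rho^2}{\rho}\bigl(1-\frac{\lambda_2}{q_s}\bigr)$ and $R_1(s)=1+\frac{\tau_\rho^2}{\rho}\bigl(1-\frac{\lambda_1}{q_s}\bigr)$. Then $q_0=\lambda_2$ and $q_p=\lambda_1$ give $\mathrm{GDA}_{jj}=1-(p-1)\kappa_\rho\tau_\rho^2$ almost immediately. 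You carry out neither route, so $\alpha_G=\kappa_\rho\tau_\rho^2$, and the off-diagonal relations involving GDA, remain conjectured rather than proved.
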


The proof of Proposition~\ref{prop:compound-symmetry} is provided in Appendix~\ref{app:proof-compound-symmetry}. The first consequence concerns the a priori bias.

\begin{corollary}[Absence of the a priori bias]
\label{cor:cs-no-bias}
Under the conditions of Proposition~\ref{prop:compound-symmetry}, CorPA, GDA, and RegPA are doubly stochastic. Consequently, for $A\in\{\mathrm{CorPA},\mathrm{GDA},\mathrm{RegPA}\}$,
\begin{equation}
    \mathbb{E}_{\beta_0\sim F}
    \left[
        D_{A,i}(\Sigma,\beta_0)
    \right]
    =
    \frac{1}{p},
    \qquad i=1,\ldots,p.
    \label{eq:cs-expected-importance}
\end{equation}
Moreover, all predictors have the same VIF:
\begin{equation}
    \mathrm{VIF}_i
    =
    \frac{
        1+(p-2)\rho
    }{
        [1+(p-1)\rho](1-\rho)
    },
    \qquad i=1,\ldots,p.
    \label{eq:cs-vif}
\end{equation}
\end{corollary}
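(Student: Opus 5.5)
The corollary follows almost directly from Proposition~\ref{prop:compound-symmetry} together with Proposition~\ref{prop:expected-importance}. For each $A\in\{C,G,R\}$, Proposition~\ref{prop:compound-symmetry} gives the reallocation matrix in the form $M_A=(1-p\alpha_A)I_p+\alpha_AJ_p$. Since $J_p1_p=p1_p$ and $1_p^\top J_p=p1_p^\top$, we have
\[
M_A1_p=(1-p\alpha_A)1_p+p\alpha_A1_p=1_p,
\qquad
1_p^\top M_A=1_p^\top .
\]
So every row-sum and every column-sum equals one.

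It remains to check non-negativity. The off-diagonal entries $\alpha_C=\tau_\rho^2$, $\alpha_G=\kappa_\rho\tau_\rho^2$ and $\alpha_R=\tau_\rho^2/h_\rho$ are non-negative because:
\begin{itemize}
\item $\kappa_\rho>0$, since $1+s\rho>0$ for $0\le s\le p-2$ whenever $\rho>-1/(p-1)$;
\item $h_\rho=1+(p-2)\rho>0$ on the same range.
\end{itemize}
For the diagonal entries $1-(p-1)\alpha_A$, the cleanest route is to appeal to the definitions rather than the closed forms. CorPA and RegPA entries are ratios of squares over a column total, so they lie in $[0,1]$. GDA entries are GD values $D_i(z_j,X)$, which are nonnegative and sum to $R^2_{z_j\cdot X}=1$ because $z_j$ lies in the column space of $X$; hence they also lie in $[0,1]$. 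Thus all three matrices are doubly stochastic.

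Proposition~\ref{prop:expected-importance}, Eq.~\eqref{eq:expectation of ORMs}, then gives $\mathbb{E}[D_{A,i}]=\frac1p\sum_j a_{ij}=\frac1p$. One caveat concerns GDA. Proposition~\ref{prop:expected-importance} requires a fixed, response-independent reallocation matrix, and GDA qualifies because it depends only on $\Sigma$ and $Z$, not on $y_0$. I will point this out explicitly in the proof.

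For the VIF, I will use the Sherman--Morrison formula:
\[
\Sigma^{-1}=\frac{1}{1-\rho}\Bigl(I_p-\frac{\rho}{1+(p-1)\rho}J_p\Bigr).
\]
Its diagonal entry is
\[
\frac{1}{1-\rho}\cdot\frac{1+(p-2)\rho}{1+(p-1)\rho},
\]
which is exactly Eq.~\eqref{eq:cs-vif}. This is identical for all $i$.

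The only mildly delicate step is non-negativity of the diagonal entries, especially for GDA. The argument via the definitions sidesteps any need to bound $(p-1)\kappa_\rho\tau_\rho^2\le1$ directly from the closed form.
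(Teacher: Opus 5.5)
Your proposal is correct and follows the paper's proof essentially step for step: the $(1-p\alpha)I_p+\alpha J_p$ form from Proposition~\ref{prop:compound-symmetry} gives unit row- and column-sums, Proposition~\ref{prop:expected-importance} then yields the $1/p$ expectation, and the Sherman--Morrison inverse gives the common VIF. Your explicit non-negativity argument from the definitions, together with your remark that GDA is response-independent, fills in what the paper simply asserts as holding ``by construction.''
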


The proof is provided in
Appendix~\ref{app:proof-cs-no-bias}. Although the common VIF in Eq.~\eqref{eq:cs-vif} diverges as $\rho\to1^-$, RegPA remains doubly stochastic. Therefore, severe homogeneous multicollinearity is not sufficient to induce GCD's a priori bias. The bias is caused by row-sum imbalance, which does not arise under the exchangeable structure of compound symmetry.

Absence of the a priori bias does not imply that the three reallocation methods agree. Their differences can be characterized by how strongly they shrink importance toward equality. Let
\[
    \eta_j
    =
    \frac{\rho_{z_jy}^2}{R^2_{y\cdot X}},
    \qquad
    \eta=(\eta_1,\ldots,\eta_p)^\top,
    \qquad
    \mu_p=\frac{1}{p}1_p,
\]
so that $1_p^\top\eta=1$.

\begin{corollary}[Shrinkage toward equal importance under compound symmetry]
\label{cor:cs-shrinkage}
Let $D_C$, $D_G$, and $D_R$ denote the normalized ORM importance vectors obtained with the CorPA, GDA, and RegPA reallocations, respectively. For each $M\in\{C,G,R\}$,
\begin{equation}
    D_M-\mu_p
    =
    (1-p\alpha_M)(\eta-\mu_p),
    \qquad
    0<1-p\alpha_M<1
    \quad\text{for }\rho\neq0.
    \label{eq:cs-shrinkage-representation}
\end{equation}
If $0<\rho<1$, then
\begin{equation}
    \alpha_R<\alpha_G<\alpha_C,
    \qquad\text{equivalently}\qquad
    1-p\alpha_C<1-p\alpha_G<1-p\alpha_R,
    \label{eq:cs-alpha-ordering}
\end{equation}
so that, whenever $\eta\neq\mu_p$, for any norm,
\begin{equation}
    \lVert D_C-\mu_p\rVert
    <
    \lVert D_G-\mu_p\rVert
    <
    \lVert D_R-\mu_p\rVert.
    \label{eq:cs-shrinkage-ordering}
\end{equation}
Furthermore,
\begin{equation}
    D_{C,i}-D_{G,i}
    =
    p(\alpha_C-\alpha_G)
    \left(
        \frac{1}{p}-\eta_i
    \right),
    \qquad i=1,\ldots,p,
    \label{eq:cs-correlation-excess-shrinkage}
\end{equation}
with $\alpha_C-\alpha_G>0$. For $-\tfrac{1}{p-1}<\rho<0$, the
orderings in Eqs.~\eqref{eq:cs-alpha-ordering}
and~\eqref{eq:cs-shrinkage-ordering} and the sign of
$\alpha_C-\alpha_G$ are all reversed.
\end{corollary}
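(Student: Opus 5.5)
Given Proposition~\ref{prop:compound-symmetry}, every reallocation matrix has the form $M=(1-p\alpha_M)I_p+\alpha_MJ_p$. The representation in Eq.~\eqref{eq:cs-shrinkage-representation} then follows by direct multiplication. The normalized ORM vector is $D_M=M\eta$, since normalizing by $R^2_{y\cdot X}=\sum_j\rho^2_{z_jy}$ turns $\rho^2_{z_jy}$ into $\eta_j$. Using $J_p\eta=1_p(1_p^\top\eta)=1_p$, I get $D_M=(1-p\alpha_M)\eta+\alpha_M1_p$. Subtracting $\mu_p=\tfrac1p1_p$ gives $(1-p\alpha_M)(\eta-\mu_p)$.

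For the bound $0<1-p\alpha_M<1$ when $\rho\neq0$, I need $0<p\alpha_M<1$. Positivity is immediate: $\tau_\rho\neq0$ for $\rho\neq0$, and both $\kappa_\rho>0$ and $h_\rho>0$ on the admissible range, because $1+s\rho>0$ for $s\le p-2$ whenever $\rho>-1/(p-1)$. For the upper bound I would use nonnegativity of the diagonal entries. Each matrix is a proportion matrix (nonnegative, column sums one), with diagonal $1-(p-1)\alpha_M\ge0$. This gives $p\alpha_M\le p/(p-1)$, which is not enough, so I would check the bound directly instead. Write $a=\sqrt{1+(p-1)\rho}$ and $b=\sqrt{1-\rho}$, so $a^2+(p-1)b^2=p$. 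Then $p\tau_\rho^2=(a-b)^2/p<1$ is equivalent to $|a-b|<\sqrt p$. This holds because $(a-b)^2\le a^2+b^2<a^2+(p-1)b^2=p$ when $p\ge3$ and $b>0$. That settles CorPA. For GDA and RegPA I would bound the extra factors: $h_\rho^{-1}\le1$ and $\kappa_\rho\le1$ when $\rho>0$. When $\rho<0$, the factors exceed one, and I would handle $p\kappa_\rho\tau_\rho^2<1$ by an explicit estimate. \textbf{This is the main obstacle}: near $\rho\to-1/(p-1)$ the sum $\kappa_\rho$ grows, but only the $s=p-2$ term blows up, and it does so like $1/(1+(p-2)\rho)$, while $(a-b)^2\to(p-1)/p\cdot p/(p-1)\cdot\ldots$ stays bounded. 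So I expect the product to remain below $1/p$. If a clean inequality is elusive, a fallback is that diagonal entries of GDA are GD values of $x_i$ for the target $z_i$, which lie strictly below that target's $R^2=1$ unless off-diagonals vanish.

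Ordering for $0<\rho<1$. From Eq.~\eqref{eq:cs-off-diagonal-relations}, $\alpha_G=\kappa_\rho\alpha_C$ and $\alpha_R=\alpha_C/h_\rho$, with $\alpha_C>0$. For $\rho>0$, each term $1/(1+s\rho)\le1$, with strict inequality for $s\ge1$; since $p\ge3$ such a term exists, so $\kappa_\rho<1$. Also $h_\rho\kappa_\rho=\frac{1}{p-1}\sum_s\frac{1+(p-2)\rho}{1+s\rho}>1$, because every term is at least $1$ and the $s=0$ term is strictly larger. Hence $\alpha_R<\alpha_G<\alpha_C$. For $\rho<0$ every inequality flips: the terms exceed one, and $h_\rho\le1+s\rho$ makes each ratio at most one, strictly for $s=0$. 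Eq.~\eqref{eq:cs-shrinkage-ordering} then follows from the representation, since $\lVert D_M-\mu_p\rVert=(1-p\alpha_M)\lVert\eta-\mu_p\rVert$ with positive factors and a nonzero norm when $\eta\ne\mu_p$.

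Finally, Eq.~\eqref{eq:cs-correlation-excess-shrinkage} is obtained by subtracting the two representations: $D_C-D_G=-p(\alpha_C-\alpha_G)(\eta-\mu_p)$, whose $i$-th component is $p(\alpha_C-\alpha_G)(1/p-\eta_i)$. The sign of $\alpha_C-\alpha_G=\alpha_C(1-\kappa_\rho)$ is read off from the comparison of $\kappa_\rho$ with one, and is reversed for negative $\rho$.
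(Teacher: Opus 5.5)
You prove the representation, the ordering of $\alpha_R,\alpha_G,\alpha_C$ for both signs of $\rho$, the norm ordering for $\rho>0$, and the excess-shrinkage identity correctly, along the same lines as the paper. There is a genuine gap, though: the bound $1-p\alpha_M>0$ is never proved for GDA and RegPA when $-\tfrac{1}{p-1}<\rho<0$. This bound is part of the statement. It is also what turns $1-p\alpha_R<1-p\alpha_G<1-p\alpha_C$ into the reversed norm ordering, because with a nonpositive factor $\lVert D_R-\mu_p\rVert=|1-p\alpha_R|\,\lVert\eta-\mu_p\rVert$ need not be the smallest.

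You also place the difficulty in the wrong spot. Nothing in $\kappa_\rho$ blows up. On the admissible range $1+s\rho\ge h_\rho>1/(p-1)$ for $s\le p-2$, so every term of $\kappa_\rho$ stays below $p-1$. The $s=p-2$ term is $1/h_\rho$, which tends to $p-1$, not to infinity. The tight case is RegPA, which you do not treat for $\rho<0$ at all. As $\rho\to-1/(p-1)$, both $p\tau_\rho^2$ and $h_\rho$ tend to $1/(p-1)$, so $p\alpha_R\to1$. Your fallback also points the wrong way. A GDA diagonal entry below $1$ only gives $\alpha_G>0$. What you need, $1-p\alpha_G>0$, is the statement that the diagonal entry $1-(p-1)\alpha_G$ exceeds the off-diagonal entry $\alpha_G$.

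The missing step is short, and your own CorPA estimate almost contains it. For every admissible $\rho$ you have $(a-b)^2<a^2+b^2$, and
\[
p\,h_\rho-(a^2+b^2)=(p-2)a^2>0 .
\]
Hence $p\alpha_R=(a-b)^2/(p\,h_\rho)<1$. You have already shown that $\alpha_G$ lies strictly between $\alpha_C$ and $\alpha_R$ for either sign of $\rho$. Therefore
\[
p\alpha_G<\max(p\alpha_C,p\alpha_R)<1 ,
\]
and no separate estimate on $\kappa_\rho$ is needed. This sandwich argument, with $C$ and $R$ bounded directly and $G$ trapped between them, is exactly how the paper closes the proof.
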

 
The proof is provided in Appendix~\ref{app:proof-cs-shrinkage}. Equation~\eqref{eq:cs-shrinkage-representation} shows that, under compound symmetry, every method returns a vector of the same form: the deviation of $\eta$ from equal importance, scaled by a \emph{retention factor} $1-p\alpha_M$. Differences among the three methods therefore reduce to how much of this deviation each one retains. For $0<\rho<1$, the benchmark GDA retains the fraction $1-p\alpha_G$, which represents the degree of shrinkage warranted by the GD criterion itself. CorPA retains strictly less: by Eq.~\eqref{eq:cs-correlation-excess-shrinkage}, relative to GDA it pulls every above-average component of $\eta$ down and every below-average component up, with the size of the distortion governed by the positive excess $p(\alpha_C-\alpha_G)$. Since CorPA combined with Johnson's minimal transformation is exactly RW, this excess shrinkage is a closed-form expression of RW's leveling problem: under compound symmetry, the leveling problem is precisely shrinkage beyond the benchmark, and CorPA is the only method that exhibits it. RegPA deviates in the opposite direction ($1-p\alpha_R>1-p\alpha_G$): it preserves more of the contrasts in $\eta$ than the benchmark deems warranted, an over-concentration rather than a leveling. As a numerical illustration, at $p=3$ and $\rho=0.9$ the retention factors are $0.39$ for CorPA, $0.53$ for GDA, and $0.68$ for RegPA.
 
One caveat applies to ordinal comparisons. Because Eq.~\eqref{eq:cs-shrinkage-representation} rescales $\eta-\mu_p$ by a common positive factor, the three methods produce identical predictor rankings under compound symmetry, and any rank-based performance metric would not distinguish them. The corollary therefore characterizes distortions in the magnitudes of importance values; differences in ordinal accuracy among the methods, such as those observed in Section~\ref{sec: simulations}, can only arise under asymmetric predictor correlation structures.

Together, the two corollaries separate two distinct requirements for a reallocation matrix. Unit row-sums ensure that no predictor's importance is systematically over- or under-estimated in expectation, while agreement with GDA further depends on how the elements are distributed within the doubly stochastic matrix. The present study addresses the first requirement: we next correct the row-sum imbalance of RegPA by mapping it to a nearby doubly stochastic matrix.
\subsection{Bias Correction via Double Stochasticity}
\label{subsec:bias-correction}
Propositions~\ref{prop:expected-importance} and~\ref{prop:gcd-vif-bias} show that GCD's a priori bias arises because the RegPA reallocation matrix is column-stochastic but not generally row-stochastic. We therefore correct RegPA by mapping it to a nearby doubly stochastic matrix, thereby enforcing unit row-sums while preserving the complete allocation of each orthogonal predictor's importance.

Let $\Delta=\{a\in\mathbb{R}^p:1_p^\top a=1, a\geq0\}$ be the probability simplex, $M_r=\{M\in\mathbb{R}_+^{p\times p}:M1_p=1_p\}$ be the set of row-stochastic matrices, and $M_c=\{M\in\mathbb{R}_+^{p\times p}:1_p^\top M=1_p^\top\}$ be the set of column-stochastic matrices. We present two algorithms for constructing a corrected RegPA matrix that lies in $M_r\cap M_c$.
\paragraph{Method of Alternating Projections:} The MAP procedure alternates Euclidean projections onto the $M_r$ and $M_c$ sets. The resulting sequence converges to a matrix in their intersection. At each step, rows and columns are projected onto the probability simplex $\Delta$ by solving:
\begin{equation}
    \min_{\tilde{a}\in\Delta}\frac{1}{2}\|\tilde{a}-a\|^2_2,
    \label{eq: opt. project onto prob. simplex}
\end{equation}
where $a$ is a row or column vector of the matrix. An efficient algorithm to solve this projection problem is provided by~\cite{wang2013projection} (Algorithm~\ref{alg:proj onto prob simplex}). This iterative procedure can be interpreted as alternating projections onto two convex, closed sets $M_r$ and $M_c$ (Algorithm~\ref{alg:MAP}). The von Neumann alternating projection theorem~\citep{von1949rings} guarantees convergence to a point in the intersection $M_r\cap M_c$ (if nonempty). Moreover, due to the linear regularity of $M_r$ and $M_c$~\citep[Corollary 5.26]{bauschke1996projection}, the convergence is linear, as shown in~\citep[Corollary 3.14]{bauschke1993convergence}. We denote the resulting doubly stochastic matrix by RegPA-MAP, and the corresponding corrected ORM as GCD-MAP.

\begin{algorithm}[htb]
    \begin{algorithmic}
        \Require $a\in\mathbb{R}^p$
        \State Sort $a$ into $\overline{a}:\overline{a}_1\geq \overline{a}_2\geq\dots\geq \overline{a}_p$
        \State Find $b=\max\{1\leq j\leq p:\overline{a}_j+\frac{1}{j}(1-\sum_{i=1}^j \overline{a}_i)>0\}$
        \State Define $c=\frac{1}{b}(1-\sum_{i=1}^b\overline{a}_i)$
        \Ensure $\tilde{a}$ s.t. $\tilde{a}_i=\max\{a_i+c, 0\},\,\forall i=1,\dots, p$.
        \caption{Euclidean Projection of a Vector onto the Probability Simplex~\citep{wang2013projection}}
        \label{alg:proj onto prob simplex}
    \end{algorithmic}
\end{algorithm}

\begin{algorithm}[htb]
    \begin{algorithmic}
    \Require A reallocation matrix $A\in\mathbb{R}^{p\times p}$ and a number of iterations $T$
    \State $t=0, A^{(t)}=A$
    \Repeat
        \State $t\leftarrow t+1$
        \State $A^{(t)}=A^{(t-1)}$
        \State Apply Algorithm~\ref{alg:proj onto prob simplex} to each column of $A^{(t)}$.
        \State Apply Algorithm~\ref{alg:proj onto prob simplex} to each row of $A^{(t)}$.
    \Until $t=T$
    \Ensure $A^{(t)}$
    \caption{Method of Alternating Projections to Double Stochasticity}
    \label{alg:MAP}
    \end{algorithmic}
\end{algorithm}

\paragraph{Sinkhorn--Knopp Algorithm (SK):} The Sinkhorn–Knopp (SK) algorithm~\citep{sinkhorn1967concerning} offers another approach to double stochasticity by alternating row and column normalization. Unlike MAP, which projects in Euclidean space, SK performs multiplicative normalization as shown in Algorithm \ref{alg:SK}. Mathematically, applying the SK algorithm to a non-negative matrix $A$ is equivalent to solving the following problem:
\begin{equation}
    \min_{M\in M_r\cap M_c} \mathrm{KL}(M\|A),
    \label{eq:KL div}
\end{equation}
where $\mathrm{KL}(M||A)=\sum_{i=1}^p\sum_{j=1}^p M_{ij}\log\frac{M_{ij}}{A_{ij}}+A_{ij}-M_{ij}$ is the Kullback--Leibler divergence. This is also a Bregman projection onto the doubly stochastic set using KL divergence~\citep{wang2010learning}. Sinkhorn and Knopp have proved that if the matrix is square and strictly positive, then SK is guaranteed to converge to a unique doubly stochastic matrix. The strict positivity also ensures the convergence is linear~\citep{knight2008sinkhorn}. For non-negative matrices (i.e., some zeros allowed), convergence still holds under mild technical conditions. In our problem, RegPA is typically strictly positive, so that the convergence is rapid and reliable. We denote the resulting corrected matrix by RegPA-SK, and the corresponding ORM as GCD-SK. 


\begin{algorithm}[htb]
    \begin{algorithmic}
    \Require A reallocation matrix $A\in\mathbb{R}^{p\times p}$ and a number of iterations $T$
    \State $t=0, A^{(t)}=A$
    \Repeat
        \State $t\leftarrow t+1$
        \State $A^{(t)}=A^{(t-1)}$
        \State Make col-sums equal to 1 by normalizing $\tilde{a}_{ij}^{(t)}={a_{ij}^{(t)}}/{\sum_{k=1}^p a_{kj}^{(t)}},\,\forall i,j=1,\dots,p.$
        \State Make row-sums equal to 1 by normalizing $a_{ij}^{(t)}={\tilde{a}_{ij}^{(t)}}/{\sum_{\ell=1}^p \tilde{a}_{i\ell}^{(t)}},\,\forall i,j=1,\dots,p.$
    \Until $t=T$
    \Ensure $A^{(t)}$
    \caption{Sinkhorn--Knopp Algorithm~\citep{sinkhorn1967concerning}}
    \label{alg:SK}
    \end{algorithmic}
\end{algorithm}
\subsection{Numerical Illustration}
We illustrate the a priori bias and the effects of the proposed corrections using a setting with $p=10$ predictors and severe multicollinearity. The purpose of this example is to visualize how row-sum imbalance in RegPA distorts GCD importance and how the MAP and SK corrections alter both the reallocation matrix and the resulting importance estimates. Figure~\ref{fig:ReA} displays the reallocation matrices for four methods: RegPA, RegPA-MAP, RegPA-SK, and the benchmark GDA. The original RegPA matrix exhibits substantial row-sum imbalance. In particular, the row corresponding to $x_4$ sums to 2.58, meaning that $x_4$ receives more than twice its fair share of total importance on average across possible responses. 

Figure~\ref{fig:RI comparison} shows the resulting RI values produced by different methods. The RI values are ordered by GD (shown in black), which serves as the reference. As expected, the uncorrected GCD (blue) deviates notably from GD, dramatically overestimating the importance of $x_4$. In contrast, the corrected methods, GCD-MAP (light blue) and GCD-SK (light green), closely track the GD rankings, indicating that both corrections successfully mitigate the bias.

Moreover, the structural similarity of the corrected reallocation matrices to the benchmark GDA can be quantified using the Frobenius distance: $    d_F(A,B)=\|A-B\|_F=(\sum_{i=1}^p\sum_{j=1}^p(a_{ij}-b_{ij})^2)^{1/2}.$ The distances are summarized in Table~\ref{tab:frob dis}.
\begin{table}[htb]
    \centering
    \caption{The Frobenius distances of RegPA and its corrections to the benchmark GDA in the numerical illustration.}
    \begin{tabular}{|c|c|c|} \hline
         $d_F(\text{RegPA},\text{GDA})$ & $d_F(\text{RegPA-MAP},\text{GDA})$ & $d_F(\text{RegPA-SK},\text{GDA})$ \\ \hline
         0.83 & 0.36 & 0.49 \\ \hline 
    \end{tabular}
    \label{tab:frob dis}
\end{table}
These results illustrate that both MAP and SK substantially reduce the discrepancy from GDA, helping to explain the improved RI performance.

\begin{figure}[tb]
     \centering
     \begin{subfigure}[b]{0.48\textwidth}
         \centering
         \includegraphics[width=\textwidth]{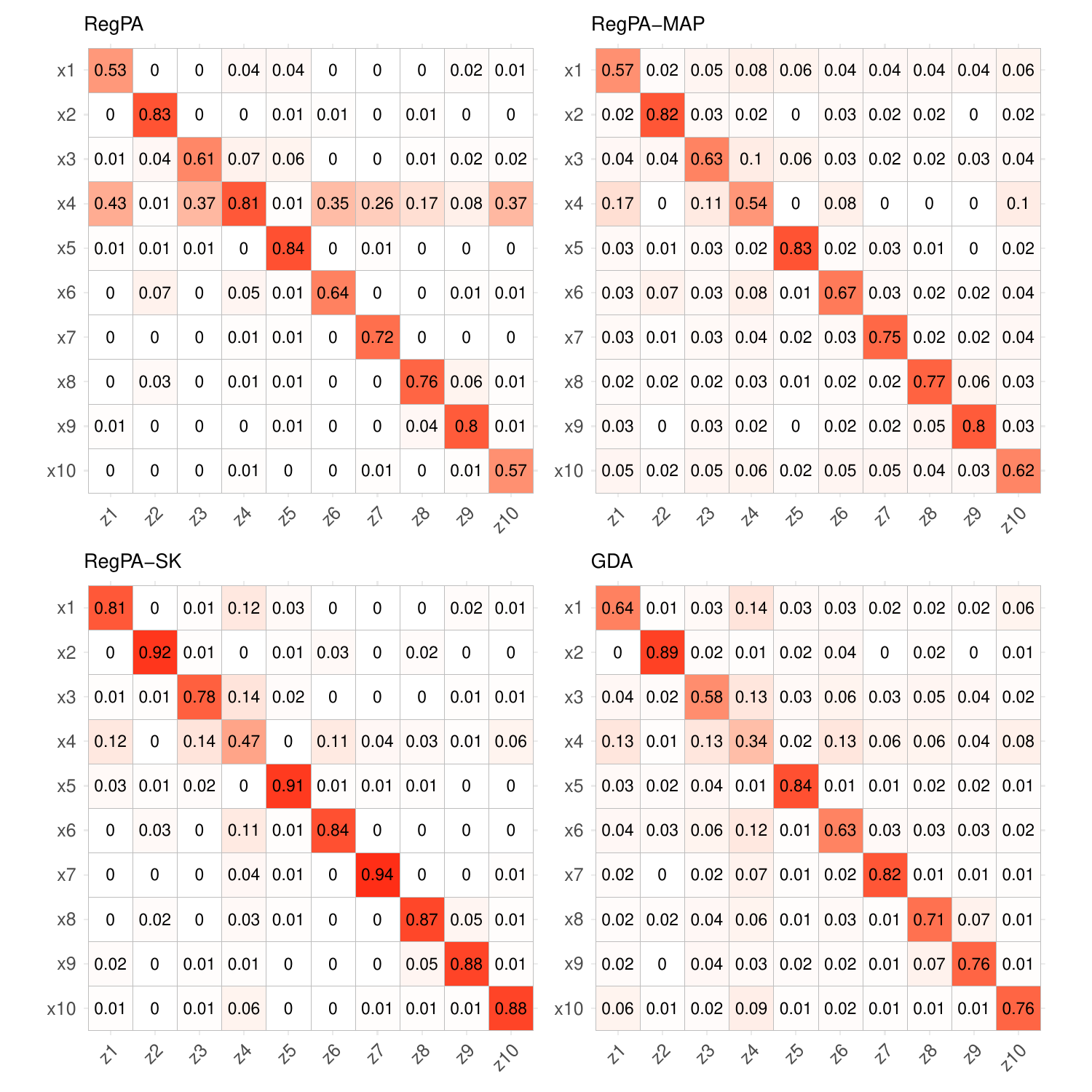}
         \caption{Reallocation matrices: RegPA (top left) exhibits severe row-sum imbalance; corrected matrices via MAP and SK (top right and bottom left) are doubly stochastic and closer to the benchmark GDA (bottom right).}
         \label{fig:ReA}
     \end{subfigure}
     \begin{subfigure}[b]{0.48\textwidth}
         \centering
         \includegraphics[width=\textwidth]{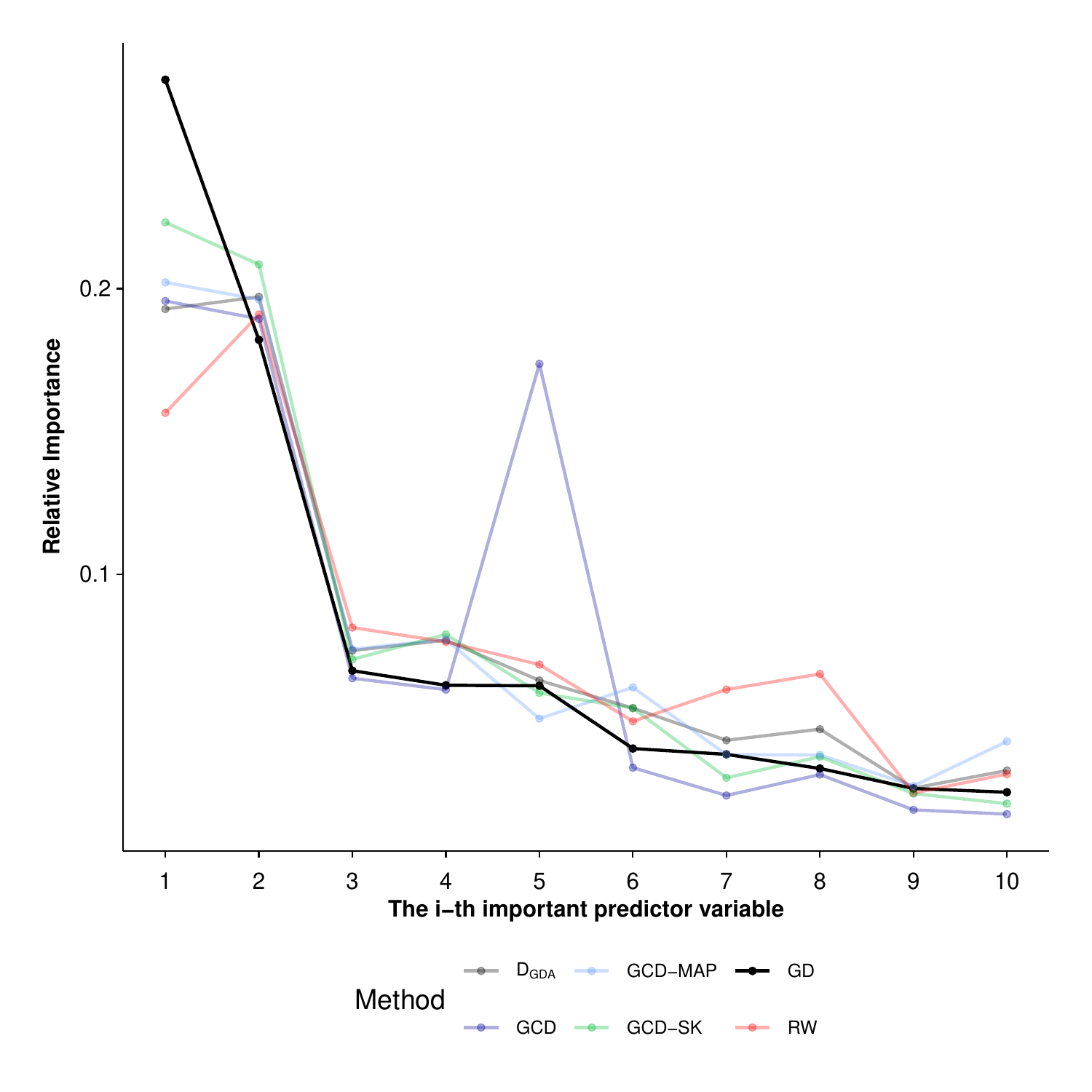}
         \caption{Relative importance values computed by different methods. Predictors are ordered by GD importance (black). GCD (blue) significantly overestimates $x_4$ due to the a priori bias, while corrected versions GCD-MAP (light blue) and GCD-SK (light green) more closely follow the GD trend.}
         \label{fig:RI comparison}
     \end{subfigure}
        \caption{A $p=10$ numerical illustration of the a priori bias in RegPA (GCD) and its correction via MAP and SK algorithms.}
        \label{fig:demo}
\end{figure}
\section{Simulations}\label{sec: simulations}
To evaluate the effectiveness of our proposed corrections for the a priori bias, we adopt the comprehensive Monte Carlo simulation framework developed by~\cite{chang2025understanding}. Their design is specifically aimed at assessing how well ORMs approximate the gold standard GD across a wide range of predictor correlation structures and response scenarios.
\subsection{Simulation Design}
The evaluation centers on estimating the expected (dis)similarity between an ORM and GD over all possible responses, for any fixed predictor correlation matrix $\Sigma$. Formally, the expected performance evaluation $f$ is defined as: 
\begin{equation}
    f(\Sigma)=\mathbb{E}_{\beta}\left[g\left(D_{A}(\Sigma,\beta), D(\Sigma,\beta)\right)\right],
    \label{eq: expected metric}
\end{equation}
where $D_A$ is the ORM using minimal transformation $Z$ for orthogonalization and $A$ for reallocation, and $D$ denotes the GD vector. The function $g(\cdot,\cdot)$ is a (dis)similarity metric. We use (1) Root Mean Square Error (RMSE) to measure quantitative deviation (lower is better) and (2) Kendall's $\tau$ to assess ordinal consistency (higher is better)~\citep{kendall1938new}. This expectation is approximated via Monte Carlo simulation with $B$ samples:
\begin{equation}
    \hat{f}(\Sigma)=\frac{1}{B}\sum_{b=1}^{B}\left[g\left(D_{A}(\Sigma,\beta^{(b)}), D(\Sigma,\beta^{(b)})\right)\right],
    \label{eq: realized metric}
\end{equation}
where each ${\beta^{(b)}}$ is drawn from $F$. This setup mirrors real-world applications (e.g., behavioral sciences), where multiple responses are analyzed over the same set of predictors. Evaluating ORMs across all possible responses provides a principled way to assess generalizability with respect to the predictor structure.
\subsection{Simulation Procedure}
To generate diverse and representative predictor correlation structures, the simulation consists of the following steps:
\begin{enumerate}
    \item Varying number of predictors. \\ The number of predictors $p$ is varied from 3 to 10.
    
    \item Sampling eigenvalues. \\ For each $p$, we generate $n_{\text{ev}}$ eigenvalue sets sampled uniformly from the $(p-1)$-dimensional simplex $\Delta^{p-1}=\{\lambda\in\mathbb{R}^p:\sum_{i=1}^p \lambda_i=p, \lambda_1\geq\ldots\geq \lambda_p>0\}$ using Kraemer's algorithm~\citep{smith2004sampling}.
    
    \item Generating predictor correlation matrices.\\ For each eigenvalue set, $n_s$ distinct correlation matrices $\Sigma$ are generated using the \texttt{rMAP} function in the \texttt{R} package \texttt{fungible}~\citep{waller2020generating}. These matrices share the same eigenvalues but differ in their eigenvectors, which significantly affect multicollinearity (e.g., VIFs).
    
    \item Sampling responses. \\ For each predictor correlation matrix $\Sigma$
    \begin{itemize}
        \item Sample $n_u$ coefficient vectors $u\sim\mathcal{U}(S^{p-1})$.
        \item Transform each $u$ to $\beta=\Sigma^{-1/2}u$ and compute $\rho_{xy}=\Sigma\beta=\Sigma^{1/2}u$.
    \end{itemize}
    
    \item Computing importance measures. \\ For each generated correlation structure $\Sigma$ and $\rho_{xy}$, compute: GD, ORMs with various reallocation methods (IdA, RegPA, CorPA, GDA), and ORMs corrected via proposed MAP and SK strategies (RegPA-MAP, RegPA-SK).
    
    \item Evaluating performance. \\ Each ORM is evaluated against GD using both RMSE and Kendall's $\tau$. The Monte Carlo approximation $\hat{f}(\Sigma)$ in Eq.~\eqref{eq: realized metric} is computed for each correlation matrix.
\end{enumerate}
The parameters are set as follows: number of eigenvalue sets $n_{\text{ev}}=1000$ for $p\in\{3,4,5,6\}$ and $n_{\text{ev}}=2500$ for $p\in\{7,8,9,10\}$; number of predictor correlation matrices for a given eigenvalue set $n_s=10$; number of responses evaluated per correlation matrix $n_u=100$ (equivalently, $B=100$). The number of iterations was set to $T=100$ for both MAP and SK, which was empirically sufficient to achieve numerical convergence in all simulated settings. Code to reproduce the simulation study and empirical illustration is available via the OSF repository: \url{https://osf.io/zga2p/overview?view_only=6b7f4b27c3924886a35b281550efacc6}.
\subsection{Results}
To facilitate interpretation, we follow the evaluation framework of~\cite{chang2025understanding}, which partitions the comprehensive Monte Carlo simulation results into four mutually exclusive scenarios, defined by two key factors of the predictor correlation structure together with their corresponding thresholds: (1) $\lambda_1/\sqrt{p}$, where $\lambda_1$ is the largest eigenvalue of the predictor correlation matrix, reflecting the dominance of the first principal component, and (2) $\mathrm{VIF}_{\max}/p$, capturing the severity of multicollinearity. For each predictor $x_i$, VIF is computed as $\mathrm{VIF}_i=1/(1-R^2_{x_i\cdot X_{-i}})$ where $X_{-i}$ is the matrix of all predictors except $x_i$ and the maximum is taken across all predictors.

Using the thresholds $\lambda_1/\sqrt{p}=1.5$ and $\mathrm{VIF}_{\max}/p=4$ adopted in this framework, the simulation results are categorized into four scenarios based on combinations of mild/severe multicollinearity and mild/strong first principal component. Figures~\ref{fig:s1.1 and s1.2} and~\ref{fig:s2.1 and s2.2} present the performance of each reallocation method in terms of RMSE and Kendall's $\tau$. Since the orthogonalization method is fixed to the minimal transformation $Z$, we refer to each ORM by its reallocation matrix and use the names interchangeably. Each point corresponds to a Monte Carlo estimate of $f(\Sigma)$ under a distinct predictor correlation matrix. 
\subsubsection{Scenario 1.1: Mild Multicollinearity with Mild First Principal Component}
\begin{figure}[tb]
    \centering
    \includegraphics[width=\linewidth]{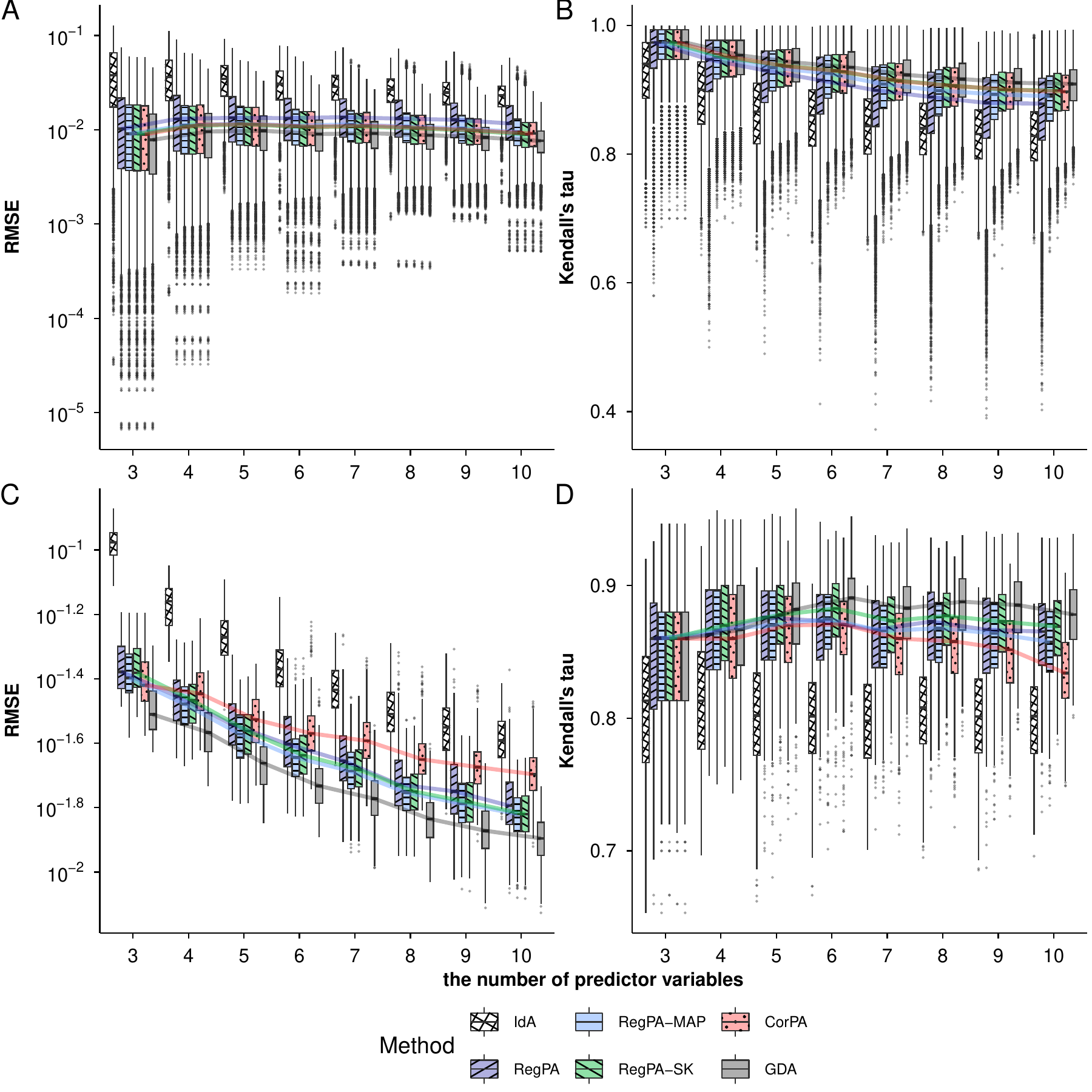}
    \caption{Performance of ORMs across predictor structures with mild multicollinearity and mild first principal component (A, B) and strong first principal component (C, D). Median performances of RegPA, RegPA-MAP, RegPA-SK, CorPA and GDA are highlighted in blue, light blue, light green, red and gray lines, respectively.}
    \label{fig:s1.1 and s1.2}
\end{figure}
In Scenario 1.1 (Figure~\ref{fig:s1.1 and s1.2}A, B), all methods perform comparably and closely approximate GD, with IdA performs consistently worse across both metrics. For this scenario, the a priori bias in GCD (RegPA) is minimal. Accordingly, the proposed corrections applied to RegPA yield only marginal improvements. As expected, the benchmark GDA achieves the strongest overall performance, with ORMs closely following. These results indicate that GCD-SK (RegPA-SK) is a viable and robust alternative to RW when multicollinearity is mild and predictor structure is not dominated by the first principal component.
\subsubsection{Scenario 1.2: Mild Multicollinearity with Strong First Principal Component}
In Scenario 1.2 (Figure~\ref{fig:s1.1 and s1.2}C, D), the first principal component is dominant, leading RW to suffer from the leveling problem, which tends to flatten the predictor importance. This flattening is consistent with the excess shrinkage of CorPA established in Corollary~\ref{cor:cs-shrinkage} for the symmetric case. As previously observed in~\cite{chang2025understanding}, this results in noticeably degraded performance of RW.

In contrast, GCD performs better in this setting, and its corrected versions (GCD-MAP and GCD-SK) further improve upon this performance, approaching the benchmark GDA. Of the two, GCD-SK again achieves better consistency with GD in ranking the variables' importance. These findings demonstrate both the underlying strength of RegPA in such predictor correlation structures and the efficacy of our proposed corrections.

\begin{figure}[tb]
    \centering
    \includegraphics[width=\linewidth]{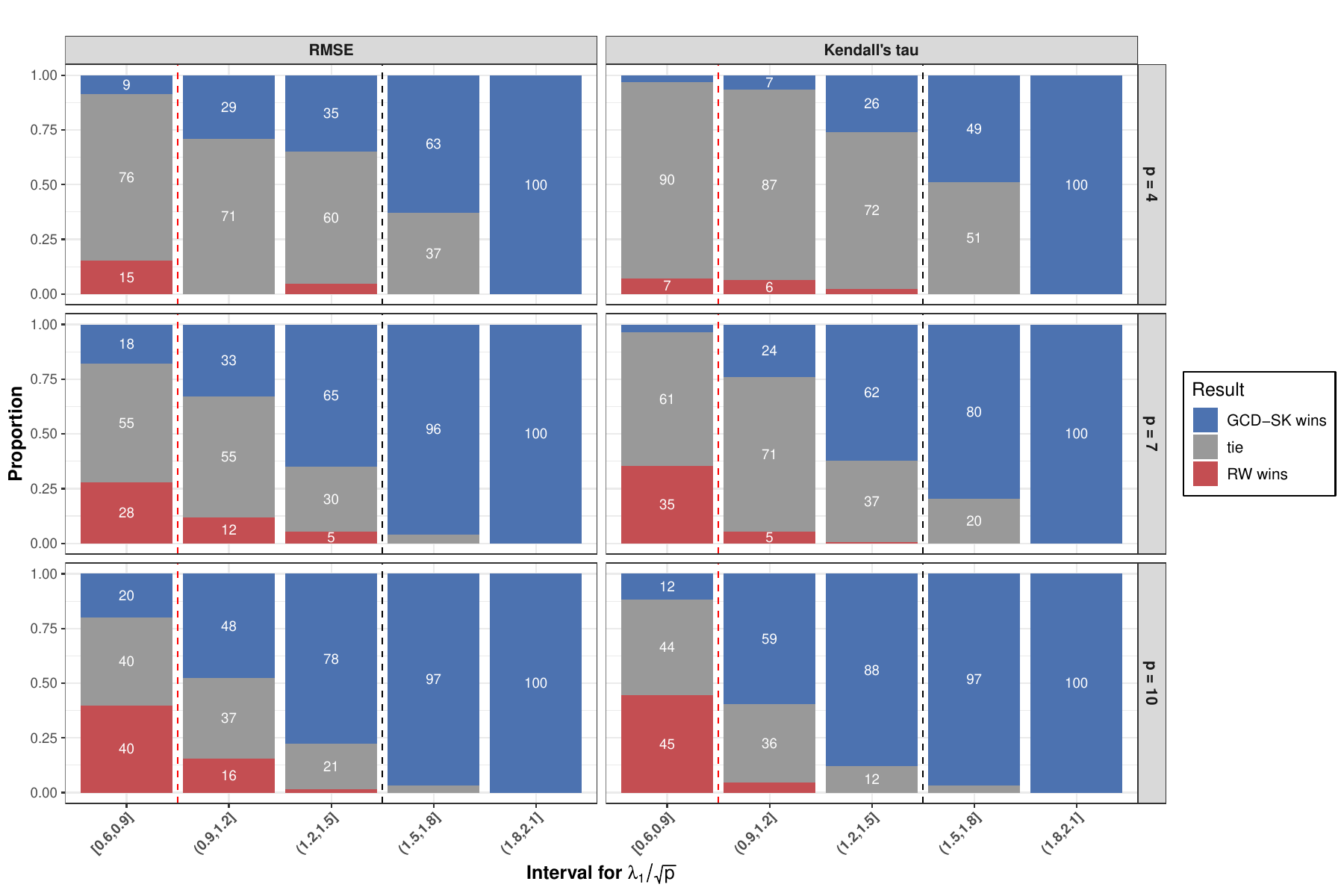}
    \caption{Win--loss analysis between RW and GCD-SK under mild     multicollinearity (Scenarios 1.1 and 1.2), based on RMSE and Kendall's $\tau$. Each comparison corresponds to one eigenvalue set: a method wins if a Wilcoxon signed-rank test over the $n_s=10$ paired correlation matrices sharing those eigenvalues indicates significantly better performance at the 5\% level, and the comparison is a tie otherwise. The dashed line marks the threshold $\lambda_1/\sqrt{p}=1.5$, beyond which the first principal component is considered dominant.}
    \label{fig:s1 win-loss}
\end{figure}
To further quantify these differences, we conduct a win--loss analysis between RW and GCD-SK (Figure~\ref{fig:s1 win-loss}), adapting the approach of~\cite{chang2025understanding}. The analysis exploits the structure of the simulation design: each eigenvalue set is associated with $n_s=10$ correlation matrices that share the same eigenvalues---and hence the same $\lambda_1/\sqrt{p}$---but differ in their eigenvectors. For each eigenvalue set, the Monte Carlo estimates $\hat{f}(\Sigma)$ of the two methods on these ten matrices form ten matched pairs, to which we apply a two-sided Wilcoxon signed-rank test at the 5\% significance level, separately for RMSE and Kendall's $\tau$. If the test is significant, the eigenvalue set is recorded as a win for the method with the better paired performance (lower RMSE or higher $\tau$); otherwise it is recorded as a tie. Figure~\ref{fig:s1 win-loss} reports the resulting proportions of wins and ties as a function of $\lambda_1/\sqrt{p}$. As $\lambda_1/\sqrt{p}$ increases, RW's win proportion drops sharply, reflecting the growing severity of the leveling problem. The comparison with uncorrected GCD quantifies the gain from the SK correction: in the corresponding RW-versus-GCD analysis of~\cite{chang2025understanding}, the crossover beyond which the regression-based measure prevails occurs at $\lambda_1/\sqrt{p}=1.5$, whereas the win proportion of GCD-SK exceeds that of RW for $\lambda_1/\sqrt{p}>0.9$. The SK correction thus substantially enlarges the region of predictor structures in which the regression-based reallocation is preferable, extending it to structures with only moderate first principal component dominance.
\subsubsection{Scenario 2.1: Severe Multicollinearity with Mild First Principal Component}
\begin{figure}[tb]
    \centering
    \includegraphics[width=\linewidth]{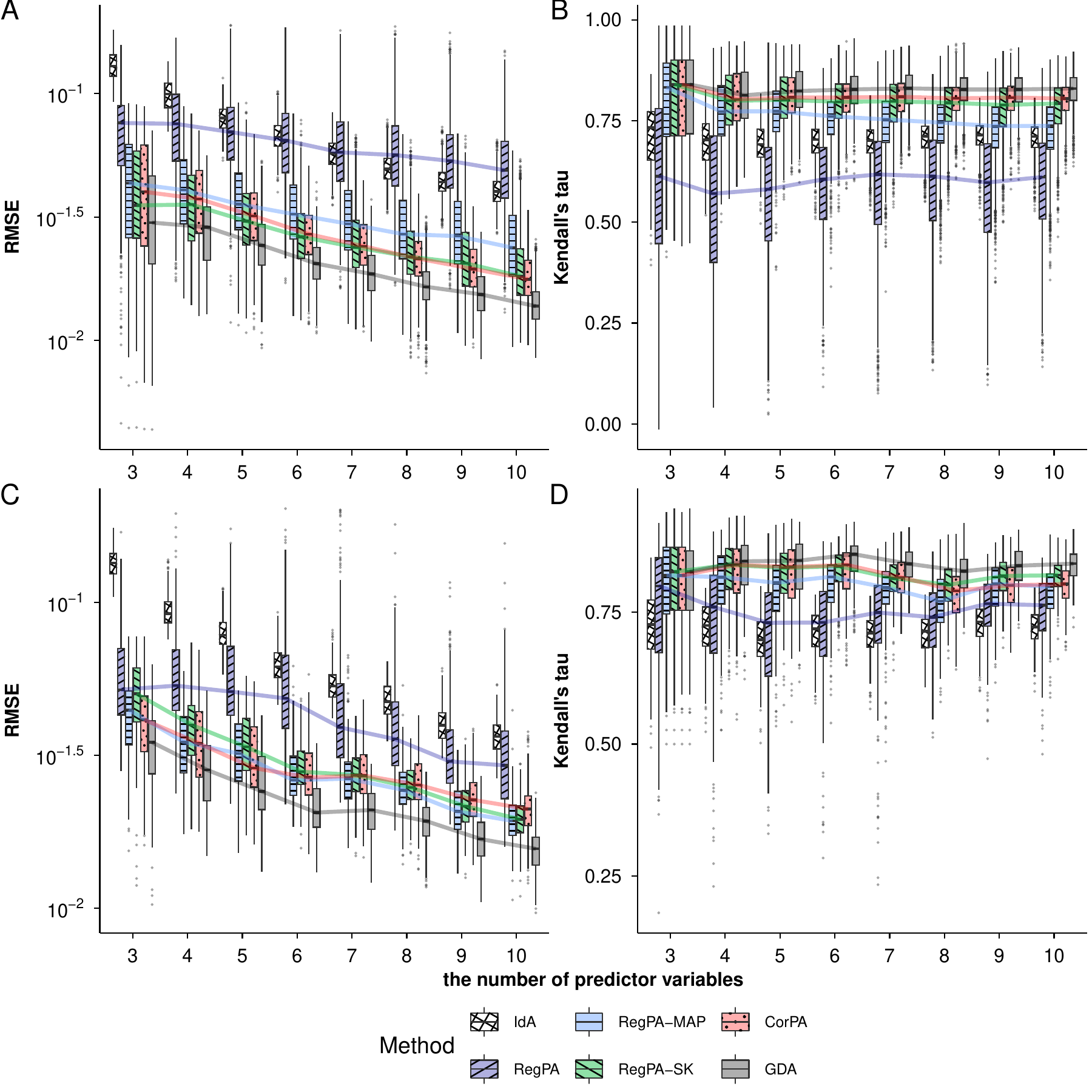}
    \caption{Performance of ORMs across predictor structures with severe multicollinearity and mild first principal component (A, B) and strong first principal component (C, D). Median performances of RegPA, RegPA-MAP, RegPA-SK, CorPA and GDA are highlighted in blue, light blue, light green, red and gray lines, respectively.}
    \label{fig:s2.1 and s2.2}
\end{figure}
Scenario 2.1 (Figure~\ref{fig:s2.1 and s2.2}A, B) presents severe multicollinearity without dominance of the first principal component. In this setting, the a priori bias in RegPA becomes pronounced, leading to degraded GCD performance. However, both corrections effectively eliminate this bias, restoring performance to a level comparable to RW. Between the two, GCD-SK shows superior performance in both RMSE and Kendall's $\tau$.
\subsubsection{Scenario 2.2: Severe Multicollinearity with Strong First Principal Component}
Scenario 2.2 (Figure~\ref{fig:s2.1 and s2.2}C, D) represents the most difficult case, combining severe multicollinearity and a dominant principal component. As expected, both RW and GCD perform poorly. Nevertheless, the corrected methods, especially GCD-SK, recover some of the lost performance. GCD-SK consistently outperforms RW as the number of predictors increases. This illustrates that correcting the a priori bias can significantly improve the robustness of RegPA even under difficult conditions.

\begin{figure}[tb]
    \centering
    \includegraphics[width=\linewidth]{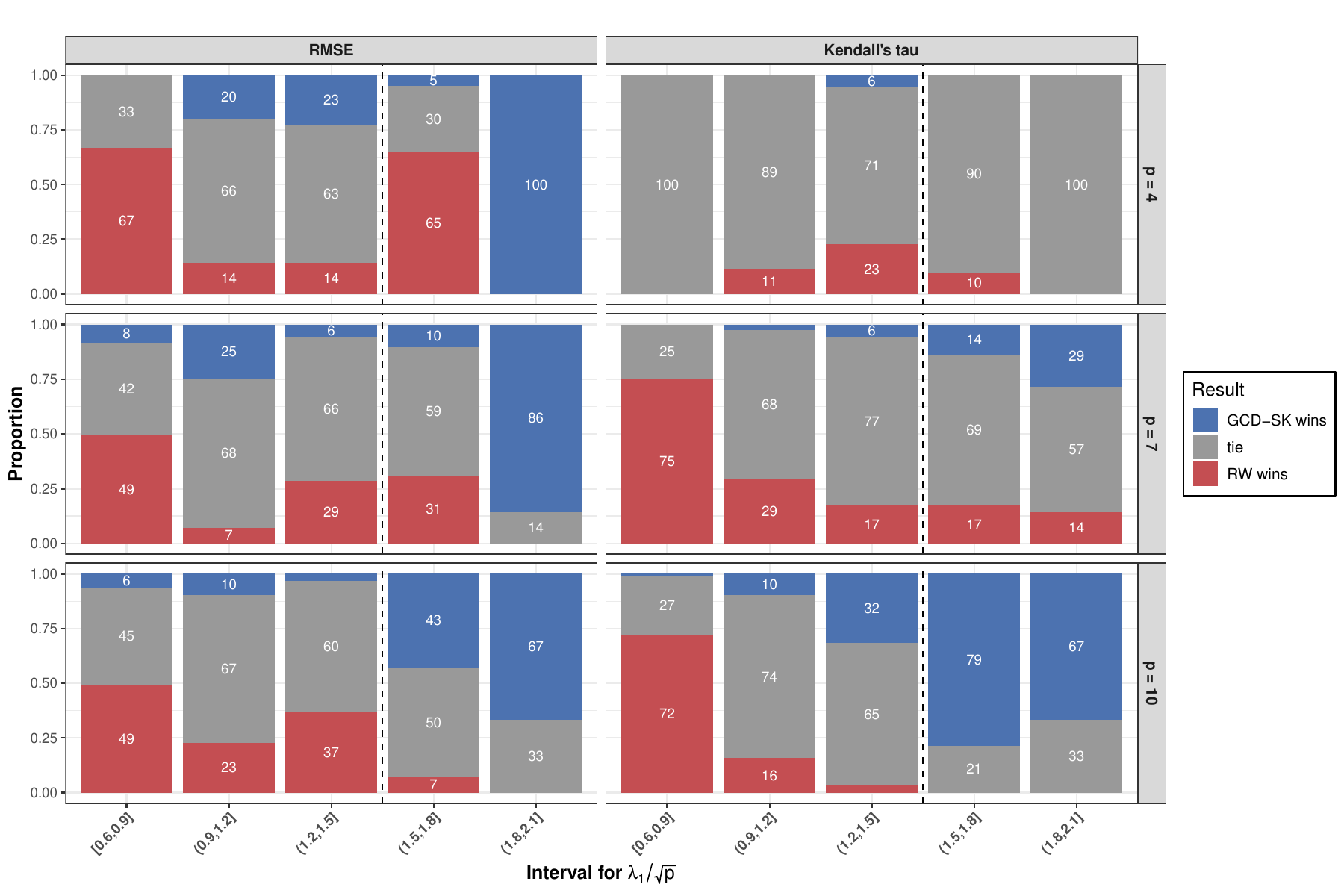}
    \caption{Win--loss analysis between RW and GCD-SK under severe multicollinearity (Scenarios 2.1 and 2.2), based on RMSE and Kendall's $\tau$. Wins and ties are determined as in Figure~\ref{fig:s1 win-loss}.}
    \label{fig:s2 win-loss}
\end{figure}
Figure~\ref{fig:s2 win-loss} presents the win--loss analysis for the severe multicollinearity scenarios, where the improvement over uncorrected GCD is most striking. In this regime, the a priori bias is so pronounced that uncorrected GCD fails to outperform RW at any level of $\lambda_1/\sqrt{p}$~\citep{chang2025understanding}. Removing the bias restores the competitiveness of the regression-based reallocation: GCD-SK overtakes RW once $\lambda_1/\sqrt{p}$ exceeds approximately $1.5$, with consistent gains in both RMSE and Kendall's $\tau$. Taken together, the two crossover points---$\lambda_1/\sqrt{p}\approx0.9$ under mild and $\approx1.5$ under severe multicollinearity---form the empirical basis of the selection guidelines in Section~\ref{subsec:practical-guidelines}.
\subsection{Practical Guidelines}\label{subsec:practical-guidelines}
To avoid the a priori bias, an ORM should use a doubly stochastic reallocation matrix. Among the bias-free methods considered here, the simulation results suggest the following guidelines:
\begin{enumerate}
    \item Mild Multicollinearity ($\mathrm{VIF}_{\max}/p<4$):
    \begin{itemize}
        \item If the first principal component is weak $(\lambda_1/\sqrt{p}<0.9)$, RW is the preferred choice due to its simplicity and high ordinal consistency.
        \item If the first principal component is moderate to strong $(\lambda_1/\sqrt{p}\geq 0.9)$, GCD-SK is recommended. In this range, RW begins to suffer from the leveling problem, while GCD-SK maintains robustness. 
        \item For reference, even the uncorrected GCD outperforms RW when $\lambda_1/\sqrt{p}\geq 1.5$. 
    \end{itemize}
    \item Severe Multicollinearity ($\mathrm{VIF}_{\max}/{p} \geq 4$):
    \begin{itemize}
        \item If the first principal component is weak $(\lambda_1/\sqrt{p} < 1.5)$, RW remains a viable option. In this setting, the a priori bias causes uncorrected GCD to underperform significantly, whereas RW remains robust. While GCD-SK successfully eliminates this bias and drastically improves the uncorrected GCD, empirical results indicate that RW still maintains a slight performance advantage over GCD-SK in this specific scenario.
        \item If the first principal component is strong ($\lambda_1/\sqrt{p} \geq 1.5$), both standard RW and uncorrected GCD struggle. In this challenging regime, GCD-SK is the recommended method, particularly as the number of predictors increases, offering a balanced performance.
    \end{itemize}
These thresholds are empirical rather than universal constants. They summarize the present simulation design for $p=3,\ldots,10$ and should be interpreted as practical diagnostics rather than sharp theoretical boundaries.
\end{enumerate}
\section{Discussion and Conclusion}\label{sec: discussion and conclusion}

Orthogonalization-Reallocation Measures provide computationally efficient approximations to GD, but their performance depends critically on how the importance of orthogonal predictors is reallocated. In this study, we formalized the a priori bias by showing that the expected normalized importance assigned by an ORM is determined by the row-sums of its reallocation matrix. For GCD, we further showed that these row-sums can be expressed as weighted averages of VIF ratios, explaining why heterogeneous predictor structures can systematically favor high-VIF predictors.

The compound-symmetry analysis provides a complementary perspective. Under this homogeneous correlation structure, GDA, RegPA, and CorPA admit closed-form expressions and are all doubly stochastic. Consequently, GCD remains free of the a priori bias even as the common VIF diverges. This demonstrates that severe multicollinearity alone is insufficient to produce the bias; row-sum imbalance is the essential mechanism. The same analysis also shows that, for positive correlations, CorPA applies stronger shrinkage toward equal importance than GDA, providing an analytical explanation of RW's leveling behavior in this setting. Thus, the a priori bias and the leveling problem are structurally distinct: the former concerns unequal expected allocation across predictors, whereas the latter can arise even from a doubly stochastic reallocation matrix.

To correct the a priori bias, we proposed enforcing double stochasticity through the Method of Alternating Projections and the Sinkhorn--Knopp algorithm, yielding GCD-MAP and GCD-SK. The simulation results show that GCD-SK generally provides a closer approximation to GD than GCD-MAP. This suggests that multiplicative scaling may preserve the informative allocation structure of RegPA more effectively than the Euclidean projections used by MAP. Across a broad range of predictor correlation structures, GCD-SK substantially improves upon uncorrected GCD and frequently outperforms RW, particularly when RW is affected by leveling. These comparisons are summarized in the selection guidelines, which recommend RW or GCD-SK according to the dominance of the first principal component and the severity of multicollinearity.

These findings also clarify the value of regression-based reallocation. RegPA uses regression coefficients rather than marginal correlations between the original and orthogonal predictors, and this joint information can be useful once the systematic row-sum imbalance is removed. Although the present compound-symmetry result explains the leveling mechanism in an analytically tractable setting, a general characterization and correction of leveling under arbitrary predictor correlation structures remain open problems. More broadly, understanding why computationally distinct ORMs can closely approximate GD remains an important direction for future research.
\section*{Acknowledgment}
This work was partially supported by Grant NSTC 114-2221-E-002-168-MY2 from the National Science and Technology Council, Taiwan.

\appendix

\section{Technical Derivations and Proofs}
\label{app:technical-derivations}
\subsection{Proof of Proposition~\ref{prop:expected-importance}}
\label{app:proof-expected-importance}

\begin{proof}
Let $P=\{1,\ldots,p\}$, and $B=\Sigma^{1/2},$ so that $x_0=Bz_0$. For any subset $S\subseteq P$, let $B_S$ be the submatrix formed by the rows of $B$ indexed by $S$. Since $B$ is nonsingular, $B_S$ has rank $|S|$. For nonempty $S$, the population coefficient of determination from regressing $y_0$ on $x_{0,S}$ is
\begin{align}
    R^2_{y_0\cdot x_{0,S}}
    &=
    \frac{
        \Cov(y_0,x_{0,S})
        \Var(x_{0,S})^{-1}
        \Cov(x_{0,S},y_0)
    }{\Var(y_0)}
    \nonumber\\
    &=
    \frac{
        u_0^\top B_S^\top
        (B_SB_S^\top)^{-1}
        B_Su_0
    }{1+\sigma^2},
    \label{eq:appendix-subset-r2}
\end{align}
because $\Var(y_0)=\|u_0\|_2^2+\sigma^2=1+\sigma^2$.

Define $\Pi_S=B_S^\top(B_SB_S^\top)^{-1}B_S$. Then $\Pi_S$ is the orthogonal projection onto the row space of $B_S$, and therefore
\[
    \tr(\Pi_S)=\rank(\Pi_S)=|S|.
\]
For $S=\varnothing$, set $\Pi_{\varnothing}=0$ and $R^2_{y_0\cdot x_{0,\varnothing}}=0$. For the full predictor set, $\Pi_P=I_p$, and hence
\[
    R^2_{y_0\cdot x_0}
    =
    \frac{1}{1+\sigma^2}.
\]
The normalized subset value function is consequently
\[
    v_{\beta_0}(S)
    =
    \frac{R^2_{y_0\cdot x_{0,S}}}
         {R^2_{y_0\cdot x_0}}
    =
    u_0^\top\Pi_Su_0.
\]

Rotational invariance of the uniform distribution on $S^{p-1}$ implies
\[
    \mathbb{E}_{u_0}[u_0u_0^\top]
    =
    \frac{1}{p}I_p.
\]
Therefore,
\begin{align}
    \mathbb{E}_{\beta_0\sim F}
    \left[
        v_{\beta_0}(S)
    \right]
    &=
    \tr\left(
        \Pi_S\mathbb{E}_{u_0}[u_0u_0^\top]
    \right)
    \nonumber\\
    &=
    \frac{\tr(\Pi_S)}{p}
    =
    \frac{|S|}{p}.
    \label{eq:appendix-expected-subset-r2}
\end{align}

The normalized GD importance is the Shapley allocation of $v_{\beta_0}$:
\[
    D_i(\Sigma,\beta_0)
    =
    \frac{1}{p}
    \sum_{S\subseteq P\setminus\{i\}}
    \frac{
        v_{\beta_0}(S\cup\{i\})-v_{\beta_0}(S)
    }{
        \binom{p-1}{|S|}
    }.
\]
By Eq.~\eqref{eq:appendix-expected-subset-r2}, every expected marginal contribution equals
\[
    \mathbb{E}_{\beta_0\sim F}
    \left[
        v_{\beta_0}(S\cup\{i\})-v_{\beta_0}(S)
    \right]
    =
    \frac{1}{p}.
\]
The corresponding GD weights sum to one:
\[
    \frac{1}{p}
    \sum_{S\subseteq P\setminus\{i\}}
    \frac{1}{\binom{p-1}{|S|}}
    =
    \frac{1}{p}
    \sum_{s=0}^{p-1}1
    =
    1.
\]
It follows that
\[
    \mathbb{E}_{\beta_0\sim F}
    \left[
        D_i(\Sigma,\beta_0)
    \right]
    =
    \frac{1}{p}.
\]

For the ORM identity, let $\tilde z_0$ be its standardized orthogonal representation. Since $\tilde z_0$ and $z_0$ are orthonormal representations of the same predictor space, there exists an orthogonal matrix $Q\in\mathbb{R}^{p\times p}$ such that $\tilde z_0=Q^\top z_0$. Defining $\tilde u_0=Q^\top u_0$ gives $z_0^\top u_0=\tilde z_0^\top\tilde u_0$. By rotational invariance,
\[
    \tilde u_0\sim\mathcal{U}(S^{p-1}),
    \qquad
    \mathbb{E}_{u_0}[\tilde u_{0j}^2]
    =
    \frac{1}{p}.
\]

Because $\Var(\tilde z_{0j})=1$ and $\Cov(\tilde z_{0j},y_0)=\tilde u_{0j}$,
\[
    \rho_{\tilde z_{0j}y_0}^2
    =
    \frac{\tilde u_{0j}^2}{1+\sigma^2}.
\]
After normalization by $R^2_{y_0\cdot x_0}=1/(1+\sigma^2)$, the ORM importance is
\[
    D_{A,i}(\Sigma,\beta_0)
    =
    \sum_{j=1}^p a_{ij}\tilde u_{0j}^2.
\]
Since $A$ is response-independent,
\begin{align}
    \mathbb{E}_{\beta_0\sim F}
    \left[
        D_{A,i}(\Sigma,\beta_0)
    \right]
    &=
    \sum_{j=1}^p
    a_{ij}
    \mathbb{E}_{u_0}[\tilde u_{0j}^2]
    \nonumber\\
    &=
    \frac{1}{p}\sum_{j=1}^p a_{ij}.
\end{align}
This proves both identities.
\end{proof}
\subsection{Proof of Proposition~\ref{prop:gcd-vif-bias}}
\label{app:proof-gcd-vif-bias}

\begin{proof}
Under Johnson's minimal transformation,
\[
    z_0=\Sigma^{-1/2}x_0,
\]
so the matrix of regression coefficients of $z_0$ on $x_0$ is
\[
    \Gamma_Z
    =
    \Sigma^{-1}\Cov(x_0,z_0)
    =
    \Sigma^{-1}\Sigma^{1/2}
    =
    \Sigma^{-1/2}.
\]
Because $\Gamma_Z$ is symmetric,
\[
    \Gamma_Z^\top\Gamma_Z
    =
    \Gamma_Z\Gamma_Z^\top
    =
    \Sigma^{-1}.
\]
Therefore, both the squared norm of the $j$-th column and the squared norm of the $j$-th row of $\Gamma_Z$ equal
\[
    \sum_{i=1}^p\gamma_{ij}^2
    =
    \sum_{i=1}^p\gamma_{ji}^2
    =
    (\Sigma^{-1})_{jj}
    =
    \mathrm{VIF}_j.
\]

By the definition of RegPA,
\[
    \mathrm{RegPA}_{ij}
    =
    \frac{\gamma_{ij}^2}
         {\sum_{\ell=1}^p\gamma_{\ell j}^2}
    =
    \frac{\gamma_{ij}^2}{\mathrm{VIF}_j},
\]
which proves Eq.~\eqref{eq:regpa-vif}. Similarly,
\[
    w_{ij}
    =
    \frac{\gamma_{ij}^2}{\mathrm{VIF}_i}
    \in[0,1],
    \qquad
    \sum_{j=1}^p w_{ij}=1.
\]
Hence,
\[
    r_i=
    \sum_{j=1}^p
    \frac{\gamma_{ij}^2}{\mathrm{VIF}_j}
    =
    \sum_{j=1}^p
    w_{ij}
    \frac{\mathrm{VIF}_i}{\mathrm{VIF}_j},
\]
which proves Eq.~\eqref{eq:regpa-row-sum-vif}.

Now let
\[
    k\in\arg\max_{i=1,\ldots,p}\mathrm{VIF}_i.
\]
Then $\mathrm{VIF}_k/\mathrm{VIF}_j\geq1$ for every $j$, so
\[
    r_k
    =
    \sum_{j=1}^p
    w_{kj}
    \frac{\mathrm{VIF}_k}{\mathrm{VIF}_j}
    \geq
    \sum_{j=1}^p w_{kj}
    =
    1.
\]
The inequality is strict if $\gamma_{kj}\neq0$ for at least one $j$ satisfying $\mathrm{VIF}_j<\mathrm{VIF}_k$. By Proposition~\ref{prop:expected-importance}, this implies
\[
    \mathbb{E}_{\beta_0\sim F}
    \left[
        D_{\mathrm{RegPA},k}(\Sigma,\beta_0)
    \right]
    =
    \frac{r_k}{p}
    >
    \frac{1}{p}.
\]

Finally, because every RegPA column sums to one,
\[
    \sum_{i=1}^p r_i
    =
    \sum_{i=1}^p\sum_{j=1}^p\mathrm{RegPA}_{ij}
    =
    p.
\]
Thus, if $r_k>1$, at least one other row-sum must satisfy $r_\ell<1$. Applying Proposition~\ref{prop:expected-importance} again gives
\[
    \mathbb{E}_{\beta_0\sim F}
    \left[
        D_{\mathrm{RegPA},\ell}(\Sigma,\beta_0)
    \right]
    <
    \frac{1}{p},
\]
which completes the proof.
\end{proof}
\subsection{Proof of Proposition~\ref{prop:compound-symmetry}}
\label{app:proof-compound-symmetry}

\begin{proof}
Let
\[
    \lambda_1=1+(p-1)\rho,
    \qquad
    \lambda_2=1-\rho,
    \qquad
    \tau=\tau_\rho.
\]
The eigenspaces spanned by $1_p$ and orthogonal to $1_p$ have eigenvalues $\lambda_1$ and $\lambda_2$, respectively. Therefore,
\begin{equation}
    \Sigma^{1/2}
    =
    \sqrt{\lambda_2}I_p+\tau J_p,
    \qquad
    \Sigma^{-1/2}
    =
    \frac{1}{\sqrt{\lambda_2}}I_p
    -
    \frac{\tau}{\sqrt{\lambda_1\lambda_2}}J_p.
    \label{eq:cs-square-roots}
\end{equation}

Under Johnson's transformation, $Z=X\Sigma^{-1/2}$, so
\[
    X^\top Z
    =
    \Sigma^{1/2}.
\]
The entries of $\Sigma^{1/2}$ are the correlations between the original and orthogonal predictors. Moreover, the squared entries in each column sum to
\[
    \left[(\Sigma^{1/2})^\top\Sigma^{1/2}\right]_{jj}
    =
    \Sigma_{jj}
    =
    1.
\]
Hence, for $i\neq j$,
\[
    \mathrm{CorPA}_{ij}
    =
    \tau^2
    =
    \alpha_C.
\]

The regression-coefficient matrix of $Z$ on $X$ is $\Gamma_Z=\Sigma^{-1/2}$. Under compound symmetry,
\[
    \mathrm{VIF}_j
    =
    (\Sigma^{-1})_{jj}
    =
    \frac{h_\rho}{\lambda_1\lambda_2},
    \qquad j=1,\ldots,p.
\]
The off-diagonal entries of $\Gamma_Z$ equal $-\tau/\sqrt{\lambda_1\lambda_2}$, and therefore
\[
    \mathrm{RegPA}_{ij}
    =
    \frac{
        \tau^2/(\lambda_1\lambda_2)
    }{
        h_\rho/(\lambda_1\lambda_2)
    }
    =
    \frac{\tau^2}{h_\rho}
    =
    \alpha_R,
    \qquad i\neq j.
\]

It remains to derive GDA. Fix $i\neq j$ and treat $z_j$ as the response. By permutation symmetry, $\mathrm{GDA}_{ij}=D_i(z_j,X)$ is identical for all $i\neq j$; denote this common value by $\alpha_G$.

For a subset of $s$ predictors, its correlation matrix has inverse
\[
    \Sigma_s^{-1}
    =
    \frac{1}{1-\rho}
    \left[
        I_s
        -
        \frac{\rho}{1+(s-1)\rho}J_s
    \right],
\]
where $J_s=1_s1_s^\top$. Let $R_0(s)$ denote $R^2_{z_j\cdot X_S}$ when $|S|=s$ and $j\notin S$, and let $R_1(s)$ denote the corresponding value when $j\in S$. For a subset $S$, let
\[
    w_S=(\Sigma^{1/2})_{S,j}
\]
denote the vector of correlations between $z_j$ and the predictors indexed by $S$. Equation~\eqref{eq:cs-square-roots} gives
\begin{align}
    R_0(s)
    &=
    \frac{\tau^2s}{1+(s-1)\rho},
    \nonumber\\
    R_1(s)
    &=
    \frac{1}{\lambda_2}
    \left[
        (\sqrt{\lambda_2}+\tau)^2
        +(s-1)\tau^2
        -
        \frac{
            \rho(\sqrt{\lambda_2}+s\tau)^2
        }{
            1+(s-1)\rho
        }
    \right].
    \label{eq:cs-subset-r2}
\end{align}

Define
\[
    q_s=1+(s-1)\rho,
    \qquad
    t_s=\sqrt{\lambda_2}+s\tau,
\]
and note the identity
\begin{equation}
    \rho
    =
    \tau
    \left(
        \sqrt{\lambda_1}+\sqrt{\lambda_2}
    \right),
    \label{eq:cs-rho-identity}
\end{equation}
which follows from $\lambda_1-\lambda_2=p\rho$ and the definition of $\tau$. Let
\[
    \Delta_0(s)=R_0(s+1)-R_0(s),
    \quad s=0,\ldots,p-2,
    \qquad
    \Delta_1(s)=R_1(s+1)-R_1(s),
    \quad s=1,\ldots,p-1.
\]
Since $q_{s+1}=q_s+\rho$ and $q_s-s\rho=1-\rho=\lambda_2$,
\[
    \Delta_0(s)
    =
    \tau^2
    \left[
        \frac{s+1}{q_{s+1}}-\frac{s}{q_s}
    \right]
    =
    \tau^2\,
    \frac{(s+1)q_s-sq_{s+1}}{q_sq_{s+1}}
    =
    \frac{\tau^2\lambda_2}{q_sq_{s+1}}.
\]
For $\Delta_1(s)$, note that $1_s^\top w_S=t_s$ when $j\in S$, so that Eq.~\eqref{eq:cs-subset-r2} can be written as $R_1(s)=\lambda_2^{-1} \bigl[(\sqrt{\lambda_2}+\tau)^2+(s-1)\tau^2-\rho t_s^2/q_s\bigr]$. Using $t_{s+1}=t_s+\tau$ and $q_{s+1}=q_s+\rho$,
\[
    \Delta_1(s)
    =
    \frac{1}{\lambda_2}
    \left[
        \tau^2
        -\rho
        \left(
            \frac{t_{s+1}^2}{q_{s+1}}
            -\frac{t_s^2}{q_s}
        \right)
    \right]
    =
    \frac{1}{\lambda_2}
    \cdot
    \frac{
        \tau^2q_sq_{s+1}
        -\rho
        \bigl(
            2\tau q_st_s+\tau^2q_s-\rho t_s^2
        \bigr)
    }{q_sq_{s+1}},
\]
and expanding the numerator yields the perfect square
\[
    \tau^2q_s^2-2\tau\rho\,q_st_s+\rho^2t_s^2
    =
    \left(
        \tau q_s-\rho t_s
    \right)^2.
\]
Crucially, $\tau q_s-\rho t_s$ does not depend on $s$: by Eq.~\eqref{eq:cs-rho-identity},
\[
    \tau q_s-\rho t_s
    =
    \tau(1-\rho)-\rho\sqrt{\lambda_2}
    =
    \tau\lambda_2
    -\tau
    \left(
        \sqrt{\lambda_1}+\sqrt{\lambda_2}
    \right)
    \sqrt{\lambda_2}
    =
    -\tau\sqrt{\lambda_1\lambda_2}.
\]
Hence $(\tau q_s-\rho t_s)^2=\tau^2\lambda_1\lambda_2$, and
\begin{equation}
    \Delta_0(s)
    =
    \frac{\tau^2\lambda_2}
         {q_s(q_s+\rho)},
    \qquad
    \Delta_1(s)
    =
    \frac{\tau^2\lambda_1}
         {q_s(q_s+\rho)}.
    \label{eq:cs-marginal-increments}
\end{equation}
In particular, $\Delta_1(s)/\Delta_0(s)=\lambda_1/\lambda_2$ for every $s$: the marginal contribution of $x_i$ to explaining $z_j$ is exactly $\lambda_1/\lambda_2$ times larger when the subset already contains $x_j$ than when it does not, uniformly over subset sizes.

Among the subsets of size $s$ excluding $i$, the proportions that contain and do not contain $j$ are $s/(p-1)$ and $(p-1-s)/(p-1)$, respectively. Consequently,
\begin{align}
    \alpha_G
    &=
    \frac{1}{p(p-1)}
    \left[
        \sum_{s=0}^{p-2}
        (p-1-s)\Delta_0(s)
        +
        \sum_{s=1}^{p-1}
        s\Delta_1(s)
    \right]
    \nonumber\\
    &=
    \frac{\tau^2}{p(p-1)}
    \sum_{s=0}^{p-1}
    \frac{
        s\lambda_1+(p-1-s)\lambda_2
    }{
        q_s(q_s+\rho)
    },
\end{align}
where the second equality adds the two vanishing boundary terms. The summand satisfies
\[
    \frac{
        s\lambda_1
        +(p-1-s)\lambda_2
    }{
        q_s(q_s+\rho)
    }
    =
    \frac{p-1}{1+s\rho}
    +
    s
    \left(
        \frac{1}{q_s}
        -
        \frac{1}{q_s+\rho}
    \right).
\]
The second term is evaluated by shifting the summation index:
\[
    \sum_{s=0}^{p-1}
    s\left(\frac{1}{q_s}-\frac{1}{q_{s+1}}\right)
    =
    \sum_{s=1}^{p-1}\frac{s}{q_s}
    -\sum_{r=1}^{p}\frac{r-1}{q_r}
    =
    \sum_{s=1}^{p-1}\frac{1}{q_s}
    -\frac{p-1}{q_p}
    =
    \sum_{s=0}^{p-2}\frac{1}{1+s\rho}
    -\frac{p-1}{\lambda_1},
\]
since $\{q_s\}_{s=1}^{p-1}=\{1,\,1+\rho,\ldots,\,1+(p-2)\rho\}$ and $q_p=\lambda_1$. The first term contributes
\[
    \sum_{s=0}^{p-1}\frac{p-1}{1+s\rho}
    =
    (p-1)\sum_{s=0}^{p-2}\frac{1}{1+s\rho}
    +\frac{p-1}{\lambda_1},
\]
and adding the two parts cancels the $\lambda_1$ terms, giving
\[
    \sum_{s=0}^{p-1}
    \frac{
        s\lambda_1
        +(p-1-s)\lambda_2
    }{
        q_s(q_s+\rho)
    }
    =
    p\sum_{s=0}^{p-2}\frac{1}{1+s\rho}.
\]
Therefore,
\[
    \alpha_G
    =
    \frac{\tau^2}{p-1}
    \sum_{s=0}^{p-2}\frac{1}{1+s\rho}
    =
    \kappa_\rho\tau^2.
\]

All three matrices have unit column-sums, so their diagonal entries are $1-(p-1)\alpha_M$, establishing Eq.~\eqref{eq:cs-reallocation-matrices}, and Eq.~\eqref{eq:cs-off-diagonal-relations} follows by taking ratios of the off-diagonal entries. At $\rho=0$, $\tau_\rho=0$, so all three matrices equal $I_p$.

\end{proof}
\subsection{Proof of Corollary~\ref{cor:cs-no-bias}}
\label{app:proof-cs-no-bias}

\begin{proof}
By Proposition~\ref{prop:compound-symmetry}, each of the three reallocation matrices has the form
\[
    A(\alpha)
    =
    (1-p\alpha)I_p+\alpha J_p.
\]
Hence,
\[
    A(\alpha)1_p
    =
    (1-p\alpha)1_p+\alpha p1_p
    =
    1_p.
\]
Because $A(\alpha)$ is symmetric, it also satisfies $1_p^\top A(\alpha)=1_p^\top$. Since each matrix has unit row- and column-sums, and is non-negative by construction, it is doubly stochastic. Proposition~\ref{prop:expected-importance} then gives
\[
    \mathbb{E}_{\beta_0\sim F}
    \left[
        D_{A,i}(\Sigma,\beta_0)
    \right]
    =
    \frac{1}{p}.
\]

The inverse of the compound-symmetric correlation matrix is
\[
    \Sigma^{-1}
    =
    \frac{1}{1-\rho}
    \left[
        I_p
        -
        \frac{\rho}{1+(p-1)\rho}J_p
    \right].
\]
Therefore,
\begin{align*}
    \mathrm{VIF}_i
    &=
    (\Sigma^{-1})_{ii}\\
    &=
    \frac{1}{1-\rho}
    \left[
        1-\frac{\rho}{1+(p-1)\rho}
    \right]\\
    &=
    \frac{
        1+(p-2)\rho
    }{
        [1+(p-1)\rho](1-\rho)
    },
\end{align*}
which is identical for every predictor.
\end{proof}
\subsection{Proof of Corollary~\ref{cor:cs-shrinkage}}
\label{app:proof-cs-shrinkage}

\begin{proof}
For a reallocation matrix
\[
    A(\alpha)
    =
    (1-p\alpha)I_p+\alpha J_p,
\]
the normalized ORM importance vector is $D_{A(\alpha)}=A(\alpha)\eta$. Since $1_p^\top\eta=1$, we have $J_p\eta=1_p$, and therefore
\[
    D_{A(\alpha)}
    =
    (1-p\alpha)\eta+\alpha1_p.
\]
Because $(1-p\alpha)\mu_p+\alpha1_p=\mu_p$, it follows that
\[
    D_{A(\alpha)}-\mu_p
    =
    (1-p\alpha)(\eta-\mu_p),
\]
which proves Eq.~\eqref{eq:cs-shrinkage-representation}.
 
We next order the off-diagonal values. By Proposition~\ref{prop:compound-symmetry}, $\alpha_G/\alpha_C=\kappa_\rho$ and $\alpha_G/\alpha_R=h_\rho\kappa_\rho$. For $0<\rho<1$,
\[
    \kappa_\rho
    =
    \frac{1}{p-1}
    \sum_{s=0}^{p-2}
    \frac{1}{1+s\rho}
    <1,
\]
because every term in the average except the first is strictly less than one, while
\[
    h_\rho\kappa_\rho
    =
    \frac{1}{p-1}
    \sum_{s=0}^{p-2}
    \frac{h_\rho}{1+s\rho}
    >1,
\]
because each ratio $h_\rho/(1+s\rho)$ is at least one, with strict inequality for at least one $s$. Hence $\alpha_R<\alpha_G<\alpha_C$. For $-\tfrac{1}{p-1}<\rho<0$, both arguments reverse, giving $\alpha_C<\alpha_G<\alpha_R$.
 
Furthermore, $p\alpha_M<1$ for every $M\in\{C,G,R\}$ and every $\rho\neq0$. Writing $\lambda_1=1+(p-1)\rho$ and $\lambda_2=1-\rho$, we have $\lambda_1\lambda_2>0$ and, since $(p-2)\lambda_1>0$ and $h_\rho>0$ on the parameter range,
\[
    \left(
        \sqrt{\lambda_1}-\sqrt{\lambda_2}
    \right)^2
    <
    \lambda_1+\lambda_2
    =
    2+(p-2)\rho
    <
    p,
    \qquad
    \lambda_1+\lambda_2
    <
    \lambda_1+\lambda_2+(p-2)\lambda_1
    =
    p\,h_\rho.
\]
The first chain gives $p\alpha_C=(\sqrt{\lambda_1}-\sqrt{\lambda_2})^2/p<1$, and the second gives $p\alpha_R=(\sqrt{\lambda_1}-\sqrt{\lambda_2})^2/(p\,h_\rho)<1$; since $\alpha_G$ always lies between $\alpha_R$ and $\alpha_C$, we conclude $0<1-p\alpha_M<1$ for all three methods.
 
Now suppose $0<\rho<1$. Then
\[
    0
    <
    1-p\alpha_C
    <
    1-p\alpha_G
    <
    1-p\alpha_R,
\]
and if $\eta\neq\mu_p$, absolute homogeneity of any norm applied to Eq.~\eqref{eq:cs-shrinkage-representation} yields
\[
    \lVert D_C-\mu_p\rVert
    <
    \lVert D_G-\mu_p\rVert
    <
    \lVert D_R-\mu_p\rVert.
\]
For $\rho<0$, the ordering of the $\alpha_M$ reverses, and hence so does the norm ordering.
 
Finally,
\begin{align*}
    D_C-D_G
    &=
    \left[
        (1-p\alpha_C)-(1-p\alpha_G)
    \right]\eta
    +
    (\alpha_C-\alpha_G)1_p\\
    &=
    p(\alpha_C-\alpha_G)
    \left(
        \mu_p-\eta
    \right),
\end{align*}
which proves Eq.~\eqref{eq:cs-correlation-excess-shrinkage}.
\end{proof}

\bibliographystyle{unsrtnat}
\bibliography{references} 

\end{document}